\newcounter{claimcounter}
\numberwithin{claimcounter}{theorem}
\newcommand{\UPos}{\ensuremath{\mathsf{UNN_{LRS} }}}
\newcommand{\myVec}[1]{\ensuremath{\mathbf{{#1} }}}
\newcommand{\SUPos}{\ensuremath{\mathsf{UP_{LRS} }}}
\newcommand{\ENId}{\ensuremath{\mathsf{ENN_{Mat}}}}
\newcommand{\EPosId}{\ensuremath{\mathsf{EP_{Mat} }}}
\newcommand{\ENSum}{\ensuremath{\mathsf{ENN_{SoM}}}}
\newcommand{\EPSum}{\ensuremath{\mathsf{EP_{SoM} }}}
\newcommand{\AAA}{\ensuremath{\mathfrak A}}
\newcommand{\BBB}{\ensuremath{\mathfrak B}}
\newcommand{\msum}[2]{\ensuremath{\sum {#1}^{{#2}}}}
\newcommand{\mycoNP}{\ensuremath{\mathsf{coNP}}}
\newcommand{\myP}{\ensuremath{\mathsf{P}}}
\newcommand{\myNP}{\ensuremath{\mathsf{NP}}}
\newcommand{\myPTIME}{\ensuremath{\mathsf{PTIME}}}
\newcommand{\myPSPACE}{\ensuremath{\mathsf{PSPACE}}}
\newcommand{\transp}[1]{\ensuremath{{#1}^{\mathsf T}}}
\newcommand{\bij}[1]{\ensuremath{{h}_{{#1}}}}
\newcommand{\myRe}{\ensuremath {\mathsf Re}}
\newcommand{\EE}{\ensuremath{\mathcal{E}}}
\begin{document}

\title{On eventual non-negativity and positivity for the weighted sum of powers of matrices \thanks{This work was partly supported by DST/CEFIPRA/INRIA Project EQuaVE, DST/SERB Matrices Grant MTR/2018/000744, and MHRD/IMPRINT-1/Project5496(FMSAFE).} 
}

\titlerunning{On eventual properties for weighted sum of powers of matrices}

\author{S Akshay \and Supratik Chakraborty \and Debtanu Pal}

\authorrunning{S. Akshay, Supratik Chakraborty and Debtanu Pal}

\institute{Indian Institute of Technology Bombay, Mumbai - 400076, India}

\maketitle              

\begin{abstract}
 The long run behaviour of linear dynamical systems is often studied by looking at eventual properties of matrices and recurrences that underlie the system. A basic problem that lies at the core of many questions in this setting is the following: given a set of pairs of rational weights and matrices $\{(w_1, A_1), \ldots, (w_m, A_m)\}$, we ask if the weighted sum of powers of these matrices is eventually non-negative (resp. positive), i.e.,  does  there exist an integer $N$ s.t for all $n \geq N$, $\sum_{i=1}^m w_i\cdot A_i^n  \geq 0$ (resp. $> 0$). The restricted setting when $m=w_1=1$, results in so-called eventually non-negative (or eventually positive) matrices, which enjoy nice spectral properties and have been well-studied in control theory. More applications arise in varied contexts, ranging from program verification to partially observable and multi-modal systems.

Our goal is to investigate this problem and its link to linear recurrence sequences. Our first result is that for $m\geq 2$, the problem is as hard as the ultimate positivity of linear recurrences, a long standing open question (known to be  $\mathsf{coNP}$-hard). Our second result is a reduction in the other direction showing that for any $m\geq 1$, the problem reduces to ultimate positivity of linear recurrences. This shows precise upper bounds for several subclasses of matrices by exploiting known results on linear recurrence sequences. Our third main result is a novel reduction technique for a large class of problems (including those mentioned above) over rational diagonalizable matrices to the corresponding problem over simple real-algebraic matrices. This yields effective decision procedures for diagonalizable matrices.
\end{abstract}

\section{Introduction}
\label{sec:intro}

The study of eventual or asymptotic properties of discrete-time linear dynamical systems  has long been of interest to both theoreticians and
practitioners. Questions pertaining to (un)-decidability and/or computational complexity of predicting the long-term behaviour of such systems have been extensively studied over the last few decades.  Despite significant advances, however, there remain simple-to-state questions that have eluded answers so far.  In this work, we investigate one such problem, explore its significance and links with other known problems, and study its complexity and computability landscape.  

The time-evolution of linear dynamical systems is often modeled using linear recurrence sequences, or using sequences of powers of matrices.  Asymptotic  properties of powers of matrices  are therefore of central interest in the study of linear differential systems, dynamic control theory, analysis of linear loop programs etc. (see e.g. \cite{10.1137/070693850}, \cite{DBLP:journals/siglog/OuaknineW15}, \cite{tiwari2004termination}, \cite{zaslavsky1991eventually}).
The
literature contains a rich body of work on the decidability and/or computational complexity of problems related to the long-term behaviour of such
systems (see, e.g. \cite{DBLP:conf/fct/Ouaknine13}, \cite{zaslavsky1991eventually}, \cite{10.1007/978-3-642-33512-9_3}, \cite{halava2005skolem}, \cite{DBLP:conf/hybrid/FijalkowOPP019}, \cite{tiwari2004termination}). 

A question of significant interest in this context is whether the powers of a given matrix of rational numbers eventually have only non-negative (resp. positive) entries. Such matrices, also called \emph{eventually non-negative} (resp. \emph{eventually  positive}) matrices, enjoy beautiful algebraic properties (\cite{NOUTSOS2006132}, \cite{CARNOCHANNAQVI2004245}, \cite{article}, \cite{friedland1978inverse}), and have been studied by mathematicians, control theorists and computer scientists, among others. 

For example, the work of~\cite{10.1137/070693850} investigates reachability and holdability of non-negative states for linear differential systems -- a problem in which eventually non-negative matrices play a central role.  Similarly,  eventual non-negativity (or positivity) of a matrix modeling a linear dynamical system makes it possible to apply the elegant Perron-Frobenius theory~\cite{rump2003perron}, \cite{maccluer2000many}  to analyze the long-term behaviour of the system beyond an initial number of time steps. Another level of complexity is added if the dynamics is controlled by a set of matrices rather than a single one. For instance, each matrix may model a mode of the linear dynamical system~\cite{lebacque2009cross}. In a partial observation setting(~\cite{lale2020logarithmic}, \cite{zhang2019learning}), we may not know which mode the system has been started in, and hence have to reason about eventual properties of this multi-modal system. It turns out that this problem reduces to analyzing the sum of powers of the per-mode matrices (see Section~\ref{sec:motiv}).

Motivated by the above considerations, we study the problem of determining whether a given matrix of rationals is eventually non-negative or eventually positive and also
a generalized version of this problem, wherein we ask if the \emph{weighted sum of powers of a given set of matrices of rationals} is eventually non-negative (resp. positive). Let us formalize the general problem statement.  \emph{\bfseries Given a set $\AAA = \{(w_1, A_1), \ldots (w_m, A_m)\}$, where each $w_i$ is a rational number and each $A_i$ is a $k\times k$  matrix of rationals, we wish to determine if $\sum_{i=1}^m w_i\cdot A_i^n$ has only non-negative  (resp. positive) entries for all sufficiently large values of $n$.} We call this problem \emph{Eventually Non-Negative (resp. Positive) Weighted Sum of Matrix Powers} problem, or {\ENSum} (resp. {\EPSum}) for short. The eventual non-negativity (resp. positivity) of powers of a single matrix is a special case of the above problem, where $\AAA = \{(1, A)\}$.  We call this special case the \emph{Eventually Non-Negative (resp. Positive) Matrix} problem, or {\ENId} (resp. {\EPosId}) for short. 

Given the simplicity of the {\ENSum} and {\EPSum} problem statements, one may be tempted to think that there ought to be simple algebraic characterizations that tell us whether
$\sum_{i=1}^m w_i\cdot A_i^n$ is eventually non-negative or positive. But in fact, the landscape is significantly nuanced. On one hand, a solution to the general {\ENSum} or {\EPSum} problem would resolve long-standing open questions in mathematics and computer science. On the other hand,  efficient algorithms can indeed be obtained under certain well-motivated conditions.

This paper is a study of both these aspects of the problem. Our primary contributions can be summarized as follows. Below, we use $\AAA = \{(w_1, A_1), \ldots (w_m, A_m)\}$ to define an instance of {\ENSum} or {\EPSum}.
\begin{enumerate}
\item If $|\AAA| \ge 2$, we show that both {\ENSum} and {\EPSum} are
  at least as hard as the ultimate non-negativity problem for
  linear recurrence sequences, which we call {\UPos} for short.  The decidability
  of {\UPos} is closely related to Diophantine approximations, and remains unresolved despite extensive research (see e.g.~\cite{ouaknine2017ultimate}).
 
  Since {\UPos} is {\mycoNP}-hard (in fact, as hard as the decision problem for universal theory of reals), so is  {\ENSum} and {\EPSum}, when $|\AAA| \ge 2$. Thus, unless {\myP} = {\myNP}, we cannot hope for polynomial-time algorithms, and any algorithm would also resolve long-standing open problems.
  \item On the other hand, regardless of $|\AAA|$, we show a reduction in the other direction from {\ENSum} (resp. {\EPSum}) to {\UPos} (resp. {\SUPos}, the strict version of {\UPos}). As a consequence, we get decidability and complexity bounds for special cases of {\ENSum} and {\EPSum}, by exploiting recent results on recurrence sequences
  \cite{ouaknine2017ultimate}, \cite{DBLP:journals/corr/OuaknineW13},
  \cite{s_et_al:LIPIcs:2017:8130}. For example, if each matrix $A_i$ in $\AAA$ is simple, i.e. has all distinct eigenvalues, we obtain {\myPSPACE} algorithms.
  \item Finally, we consider the case where $A_i$ is diagonalizable (also called non-defective or inhomogenous dilation map) for each $(w_i,A_i)\in \AAA$. This is a practically useful class of matrices and strictly subsumes simple matrices.   We present a novel reduction technique for a large family of problems (including eventual non-negativity/positivity, everywhere non-negativity/positivity etc.) over diagonalizable matrices to the corresponding problem over simple matrices.  This yields effective decision procedures for {\EPSum} and {\ENSum} for diagonalizable matrices. Our reduction makes use of a novel perturbation analysis that also has other interesting consequences.
  \end{enumerate}
  
  As mentioned earlier, the eventual non-negativity and positivity problem for single rational matrices are well-motivated in the literature, and {\EPosId} (or {\EPSum} with $|\AAA| = 1$) is known to be in {\myPTIME}~\cite{NOUTSOS2006132}. But for {\ENId}, no decidability results are known to the best of our knowledge. From our work, we obtain two new results about {\ENId}: (i) in general {\ENId} reduces to {\UPos} and (ii) for  diagonalizable matrices, we can decide {\ENId}. What is surprising (see Section~\ref{sec:pos}) is that the latter decidability result goes via ${\ENSum}$, i.e. the multiple matrices case. Thus, reasoning about sums of powers of matrices, viz. {\ENSum}, is useful even when reasoning about powers of a single matrix, viz. \ENId.

\subsection{Potential applications of \ENSum{} and \EPSum{}}
\label{sec:motiv}

A prime motivation for defining the generalized problem statement \ENSum{} is that it is useful even when reasoning about the single matrix case \ENId{}. However and unsurprisingly, \ENSum{} and \EPSum{} are  also well-motivated independently. Indeed, for every application involving a linear dynamical system that reduces to  \ENId{}/\EPosId{}, there is a naturally defined aggregated version of the application involving multiple independent linear dynamical systems that reduces to  \ENSum{}/\EPSum{} (e.g see the swarm of robots example that follows). Beyond this, \ENSum{}/\EPSum{} arise naturally and directly when solving problems in different practical scenarios. We give a detailed description of several such applications in this section. 

\medskip 

\noindent {\bf Partially observable multi-modal systems.}
Our first example comes from the domain of cyber-physical systems in a partially observable setting. Consider a system (e.g. a robot) with $m$ modes of operation, where the $i^{th}$ mode dynamics is given by a linear transformation encoded as a $k \times k$ matrix of rationals, say $A_i$.  Thus, if the system state at (discrete) time $t$ is represented by a $k$-dimensional rational (row) vector $\myVec{u_t}$, the state at time $t+1$, when operating in mode $i$, is given by $\myVec{u_t}A_i$. Suppose the system chooses to operate in one of its various modes at time $0$, and then sticks to this mode at all subsequent time. Further, the initial choice of mode is not observable, and we are only given a probability distribution over modes for the initial choice. This is natural, for instance, if our robot (multi-modal system) knows the terrain map and can make an initial choice of which path (mode) to take, but cannot change its path once it has chosen. If $p_i$ is a rational number denoting the probability of choosing mode $i$ initially, then the expected state at time $n$ is given by $\sum_{i=1}^m p_i\cdot \myVec{u_0}A_i^n$ $= \myVec{u_0}\big(\sum_{i=1}^m p_i\cdot A_i^n\big)$. A safety question in this context is whether starting from a state $\myVec{u_0}$ with all non-negative (resp. positive) components, the system is expected to eventually stay locked in states that have all non-negative (resp. positive) components. In other words, does $\myVec{u_0}\big(\sum_{i=1}^m p_i\cdot A_i^n\big)$ have all non-negative (resp. positive) entries for all sufficiently large $n$? Clearly, a sufficient condition for an affirmative answer to this question is to have $\sum_{i=1}^n p_i\cdot A_i^n$ eventually non-negative (resp. positive), which is an instance of {\ENSum} (resp. {\EPSum}).

\medskip

\noindent{\bf Commodity flow networks.}  Consider a flow network where $m$ different commodities $\{c_1,\ldots, c_m\}$ use the same flow infrastructure spanning $k$ nodes, but have different loss/regeneration rates along different links. For every pair of nodes $i,j\in\{1,\ldots,k\}$ and for every commodity $c\in\{c_1,\ldots,c_m\}$, suppose $A_c[i,j]$ gives the fraction of the flow of commodity $c$ starting from $i$ that reaches $j$ through the link connecting $i$ and $j$ (if it exists).  In general, $A_c[i,j]$ is the product of the fraction of the flow of commodity $c$ starting at $i$ that is sent along the link to $j$, and the loss/regeneration rate of $c$ as it flows in the link from $i$ to $j$. Note that $A_c[i,j]$ can be $0$ if commodity $c$ is never sent directly from $i$ to $j$, or the commodity is lost or destroyed in flowing along the link from $i$ to $j$.  It can be shown that $A_c^n[i,j]$ gives the fraction of the flow of $c$ starting from $i$ that reaches $j$ after $n$ hops through the network. If commodities keep circulating through the network ad-infinitum, we wish to find if the network gets \emph{saturated}, i.e., for all sufficiently long enough hops through the network, there is a non-zero fraction of some commodity that flows from $i$ to $j$ for every pair $i,j$.  This is equivalent to asking if there exists $N\in\mathbb{N}$ such that  $\sum_{\ell=1}^m A_{c_\ell}^n>0$.  If different commodities have different weights (or costs) associated, with commodity $c_i$ having the weight $w_i$, the above formulation asks if $\sum_{\ell=1}^m w_\ell.A_{c_\ell}^n$ is eventually positive, which is effectively the \EPSum{} problem. 

\noindent{\bf Eventual Occupancy for a Swarm of Robots.} The first is again from the domain i.e cyber physical systems where we consider a swarm of $m$ robots denoted by $R_i : \{1,2, \ldots, m\}$ collectively doing a job. Suppose the robots have similar but not identical dynamics (a practical reality), described by a set of  $n \times n$ rational matrices $A_i$, $i \in \{ 1, 2, \ldots, m \}$, one for each robot $R_i$. Now, we assume that each robot $R_i$ starts from the same positive initial state vector $\mathbf{u_0}$.  However, due to variation among the dynamics ($A_i$'s) of the different robots, the exact states of the various robots may not coincide as they evolve over time.  A (possibly weighted) average of the individual robot's states gives a single aggregated state of the swarm.  Properties of this aggregated measure are therefore of interest when studying the aggregate swarm dynamics. A common problem of interest in this setting is to ask if the aggregate swarm eventually occupy a desired region of space. Assuming that the swarm state captures the coordinate position, this question amounts to asking if $\mathbf{u_0}.(w_1.A_1^n + ... w_m.A_m^n)$ is eventually non-negative or positive, where all $w_i > 0$ and $w_1 + ... w_m = 1$.  A sufficient condition for this to happen is when $(w_1. A_1^n + ... w_m.A_m^n)$ is eventually non-negative or positive. The problem therefore reduces to an instance of the \ENSum{} problem where the set $\AAA = \{(1, A_1), \ldots (1, A_m)\}$ and each $A_l$ is a $n\times n$ matrix, governing the dynamics of the $i^{th}$ robot $R_i$.
\medskip 

\noindent {\bf Quantitative properties for weighted automata.}
 Our next  example comes from the study of quantitative properties of special languages defined by weighted automata~\cite{weighted-aut-book}.  Consider a weighted automaton $\mathcal{W}$ with states $\{s_1, \ldots s_k\}$ over the alphabet $\{a_1, \ldots a_m\}$. For every transition from $s_i$ to $s_j$ on letter $a_l$, suppose $w_{i,j,l}$ is the rational-valued weight associated with the transition.  The automaton $\mathcal{W}$ naturally defines the languages $P_{i,j,t}$ for all $i, j \in \{1, \ldots k\}$ and $t > 0$, where $P_{i,j,t}$ consists of all length $t$ "pure" (i.e. no mixing of letters) words that label a path from $s_i$ to $s_j$ in $\mathcal{W}$.  In other words, $P_{i,j,t}  = \{\sigma : |\sigma| = t, \sigma \in (a_1^* + \ldots a_m^*)$ and $\exists$ path from $s_i$ to $s_j$ labeled $\sigma\}$. The weight of $P_{i,j,t}$ is simply the sum of weights of all paths from $s_i$ to $s_j$ labeled by a word in $P_{i,j,t}$, where the weight of a path is the product of weights of edges along the path. An interesting question to ask in this setting is whether all $P_{i,j,t}$s ultimately have non-negative (resp. positive) weights.  It can be shown that this problem reduces to {\ENSum} (resp. {\EPSum}), where the set $\AAA = \{(1, A_1), \ldots (1, A_m)\}$ and each $A_l$ is a $k\times k$ matrix, with $A_l[i,j] = w_{i,j,l}$.

\medskip 

\noindent {\bf Continuous linear dynamical systems.} So far, we have looked at applications where the eventual properties of \emph{discrete time linear systems} were of interest. Given a rational state transformation matrix and initial state vector we were interested in the eventual non-negativity(resp. positivity) of the state vector. It would appear that the eventual non negativity problem is relevant to discrete time linear systems only. But as we discuss below, the interest in the problem of eventual non-negativity of states spans over continuous time linear dynamical systems, wherein the state transformation matrices and initial states could possibly have real entries. Some interesting applications arising out of non-negative real matrices are pointed out in \cite{berman1989nonnegative} and \cite{farina2000positive}. These applications arise out a large variety of domains : engineering, medicine, economics to name a few. While these applications argue about systems having a single state transformation matrix, large systems where multiple such real continuous time linear sub-systems interact with each other to achieve a single objective are also commonplace. It would be an interesting question to ask whether we can argue about the eventual non-negativity of the state vector of such a system. 

Any continuous time linear dynamical system takes the following form : $\mathbf{\dot{u}}(t) = A\mathbf{u}(t)$, $A \in \mathbb{R}^{k \times k}$, $\mathbf{u}(0) = \mathbf{u_0} \in \mathbb{R}^k$, $t > 0$, the solution of which can be represented by the set $\{ \mathbf{u}(t) = e^{tA}\mathbf{u_0} | t \in [0, \infty) \}$. Now consider a collection of such real continuous time dynamical systems : $\{ S_i : \{ \mathbf{\dot{u}_i}(t) = A_i\mathbf{u_i} \}, i \in [1,m] \}$ interacting with each other, starting from the same initial state $\mathbf{u_0}$ such that the aggregated dynamics of the system is given by the weighted sum of each sub-system. Moreover, the contribution of each sub-system $S_i$, is decided by a weight $w_i$ associated with it. The state vector of such a system is therefore given by $\mathbf{\dot{u}}(t) = \sum_{i=1}^{m} w_i\mathbf{\dot{u}}_i(t)$ $ = \sum_{i=1}^{m} w_iA_i\mathbf{u_i}(t)$. Now, we can write the state vector in the exponential solution form at the $t^{th}$ time instant as follows : $\mathbf{u}(t) = \sum_{i=1}^{m} w_ie^{tA_i}\mathbf{u_0}$ $=\mathbf{u_0}(\sum_{j=0}^{\infty}\frac{t^i}{i!})(w_1A_1^j + \ldots + w_mA_m^j)$. Observe that $(\sum_{j=0}^{\infty}\frac{t^i}{i!}) > 0$, therefore if the system starts at a non-negative (resp. positive) initial state vector, the eventual non-negativity of the state vector $\mathbf{u}(t)$ of the system at the $t^{th}$ time instant, is equivalent to asking whether $\exists N$ s.t $\forall n > N$, $w_1A_1^n + \ldots + w_mA_m^n \geq 0$ (resp. $> 0$). This amounts to asking whether weighted sum of the set $\AAA = \{ (w_1, A_1), \ldots, (w_m, A_m) \}$ is eventually non-negative (resp. positive), exactly an instance of the \ENSum{} (resp. \EPSum{}) problem.

\subsection{Other related work}

Our problems of interest are different from other well-studied  problems that arise if the system is allowed to choose its mode independently at each time step (e.g. as happens in Markov decision processes~\cite{KVAK10,AGV18}).  The crucial difference stems from the fact that we require that the mode be chosen  once initially, and subsequently, the system must follow the same mode forever. Thus, our problems are prima facie different from those related to general probabilistic or weighted finite automata, where reachability of states and  questions pertaining to long-run behaviour are either known to be undecidable or have remained open for long (\cite{article1}, \cite{gimbert2010probabilistic}, \cite{almagor2020s}). Even in the case of unary probabilistic/weighted finite automata ~\cite{AAOW15,bell2021decision,BFLM20,AGKV16}, reachability is known in general to be as hard as the Skolem problem on linear recurrences -- a long-standing open problem, with decidability only known in very restricted cases. The difference sometimes manifests itself in the simplicity/hardness of solutions.  For example,   {\EPosId} (or {\EPSum} with $|\AAA| = 1$) is known to be in {\myPTIME}~\cite{NOUTSOS2006132} (not so for {\ENId} however), whereas it is still open whether the reachability problem for unary probabilistic/weighted automata is decidable. 
It is also worth remarking that instead of the sum of powers of  matrices, if we considered the product of their powers, we would effectively be solving  problems akin to the \emph{mortality problem} \cite{bell2012mortality}, \cite{bell2021mortality} (which asks whether the all-$0$ matrix can be reached by multiplying with repetition from a set of matrices) -- a notoriously difficult problem. The diagonalizable matrix restriction is a common feature in in the context of linear loop programs (see, e.g, ~\cite{OuakninePWSODA15,AlmagorKKOPOPL021}), where matrices are used for updates. Finally, logics to reason about temporal properties of linear loops have been studied, although decidability is known only in  restrictive settings, e.g. when each predicate defines a semi-algebraic set contained in some $3$-dimensional subspace, or has intrinsic dimension $1$~\cite{KarimovLOPVWWPOPL22}.

\section{Preliminaries}
\label{sec:prelim}

The symbols $\mathbb{Q}, \mathbb{R}$, $\mathbb{A}$ and $\mathbb{C}$ denote the set of rational, real, algebraic and complex numbers respectively.   Recall that an \emph{algebraic number} is a root of a non-zero polynomial in one variable with rational coefficients. An algebraic number can be real or complex. We use $\mathbb{RA}$ to denote the set of real algebraic numbers (which includes all rationals). The sum, difference and product of two (real) algebraic numbers is again (real) algebraic.
Furthermore, every root of a polynomial equation with (real) algebraic coefficients is again (real) algebraic. We call matrices with all rational (resp. real algebraic or real) entries \emph{rational} (resp. \emph{real algebraic} or \emph{real}) \emph{matrices}.  We use $A \in \mathbb{Q}^{k \times l}$ (resp. $A \in \mathbb{R}^{k\times l}$ and $A \in \mathbb{RA}^{k\times l}$) to denote that $A$ is
a $k\times l$ rational (resp. real and real algebraic) matrix, with rows indexed $1$ through $k$, and columns indexed $1$ through $l$.  The entry in the $i^{th}$ row and $j^{th}$ column of a matrix $A$ is denoted $A[i,j]$.  If $A$ is a column vector (i.e. $l = 1$), we often use boldface letters, viz. $\mathbf{A}$, to refer to it. In such cases, we use $\mathbf{A}[i]$ to denote the $i^{th}$ component of $\mathbf{A}$, i.e. $A[i,1]$.  The transpose of a $k \times l$ matrix $A$, denoted $\transp{A}$, is the $l \times k$ matrix obtained by letting $\transp{A}[i,j] = A[j,i]$ for all $i \in \{1, \ldots l\}$ and $j \in \{1, \ldots k\}$.  Matrix $A$ is said to be \emph{non-negative} (resp. \emph{positive}) if all entries of $A$ are non-negative (resp. positive) real numbers.  Given a set $\AAA
= \{(w_1, A_1), \ldots (w_m, A_m)\}$ of (weight, matrix) pairs, where each $A_i \in \mathbb{Q}^{k\times k}$ (resp. $\in \mathbb{RA}^{k\times k}$) and each $w_i \in \mathbb{Q}$,
we use $\msum{\AAA}{n}$ to denote the weighted matrix sum $\sum_{i=1}^m w_i\cdot A_i^n$,
for every natural number $n > 0$.  Note that $\sum \AAA^n$ is itself a
matrix in $\mathbb{Q}^{k\times k}$ (resp. $\mathbb{RA}^{k\times k}$).  

\begin{definition}
  We say that $\AAA$ is eventually non-negative (resp. positive) iff
  there is a positive integer $N$ s.t., $\msum{\AAA}{n}$ is
  non-negative (resp. positive) for all $n \geq N$.
\end{definition}

The {\ENSum} (resp. {\EPSum}) problem, described in
Section~\ref{sec:intro}, can now be re-phrased as: \emph{Given a
set $\AAA$ of pairs of rational weights and rational $k\times k$ matrices, is $\AAA$ eventually non-negative
(resp. positive)?} As mentioned in Section~\ref{sec:intro}, if
$\AAA = \{(1, A)\}$, the {\ENSum} (resp. {\EPSum}) problem is also called {\ENId} (resp. {\EPosId}). 

We note that the study of {\ENSum} and {\EPSum} with $|\AAA| = 1$ is effectively the study of {\ENId} and {\EPosId} i.e., wlog we can assume $w=1$. Clearly,  if $w > 0$, then $w\cdot A^n$ is eventually non-negative (resp. positive) iff $A^n$ is eventually non-negative (resp. positive).  Therefore, $w$ may be taken to be $1$ in this case.  The case when $w=0$ is trivial.  Finally, if $w < 0$, $w\cdot A^n$ is eventually non-negative (resp. positive) iff there exists $N > 0$ such that all entries of $A^n$ are non-positive (resp. negative) for all $n \ge N$.  However, no rational matrix $A$ can have all entries of $A^n$ non-positive, with at least one negative entry, for all $n \ge N$.\footnote{$A^{2N} = A^{N}\times A^{N}$ must have all non-negative entries if $A^N$ has all non-positive entries. Hence, $A^n$ cannot have all non-positive entries with at least one negative entry, for all $n \ge N$.} Hence, if $w < 0$, the only interesting case amounts to asking if $A^n$ equals the all-$0$ matrix for some $n > 0$. For a $k\times k$ matrix $A$, this is equivalent to asking if for each $i \in \{1, \ldots k\}$, there exists $n_i > 0$ such that $A^{n_i} {\mathbf e_i}^T = {\mathbf 0}^T$, where $\mathbf{e_i}$ is the $k$-dimensional unit vector with its $i^{th}$ component set to $1$ and all other components set to $0$, and $\mathbf 0$ is the $k$-dimensional vector of all $0$s.  Each of these problems is an instance of the \emph{Orbit Problem}~\cite{orbit}, which is known to be decidable in {\myPTIME}.  Therefore, the only remaining interesting case  when $\AAA = \{(w, A)\}$ is if $w = 1$. This explains why the study of {\ENSum} and {\EPSum} with $|\AAA| = 1$ is effectively the study of {\ENId} and {\EPosId} respectively.

The \emph{characteristic polynomial} of a matrix
$A \in \mathbb{RA}^{k \times k}$ is given by $\mathit{det}(A-\lambda I)$, where
$I$ denotes the $k\times k$ identity matrix.  Note that this is a
degree $k$ polynomial in $\lambda$.  The roots of the characteristic
polynomial are called the \emph{eigenvalues} of $A$.  The non-zero
vector solution of the equation $A\mathbf{x} = \lambda_i \mathbf{x}$,
where $\lambda_i$ is an eigenvalue of $A$, is called
an \emph{eigenvector} of $A$.  Although  $A \in \mathbb{RA}^{k\times k}$, in general it can have eigenvalues $\lambda \in \mathbb{C}$ which are all algebraic numbers. An eigenvector is said to be positive
(resp. non-negative) if each component of the eigenvector is  a positive
(resp. non-negative) rational number.  A matrix is called \emph{simple} if all its eigenvalues are distinct. Further, a matrix $A$ is called \emph{diagonalizable} if there exists an invertible matrix $S$ and diagonal matrix $D$ such that $S^{-1} D S= A$. 

The study of weighted sum of powers of matrices is intimately related to the study of \emph{linear recurrence sequences
(LRS)}, as we shall see.
We now present some definitions and useful properties of LRS.  For more details on LRS, the reader is
referred to the work of Everest et al. \cite{d2ad832a2cc24904b7936daaa5c1c17d}.

A sequence of rational numbers $\langle u \rangle$ = $\langle u_n \rangle_{n=0}^{\infty}$ is called an LRS of \emph{order} $k ~(> 0)$ if
     the $n^{th}$ term of the sequence, for all $n \geq k$, can be
     expressed using the recurrence:
         $u_n = a_{k-1}u_{n-1} + \ldots + a_1u_{n-k-1} + a_0u_{n-k}.$ 
         
Here, $a_0 ~(\neq 0), a_1, \ldots, a_{k-1} \in \mathbb{Q}$ are called the
\emph{coefficients} of the LRS, and $u_0, u_1, \ldots, u_{k-1} \in \mathbb{Q}$ are called the
\emph{initial values} of the LRS. Given the
coefficients and initial values, an LRS is uniquely defined.  However, the same LRS may be defined
by multiple sets of coefficients and corresponding initial values.  An
LRS $\langle u \rangle$ is said to be \emph{periodic} with period $\rho$
if it can be defined by the recurrence $u_n = u_{n-\rho}$ for all $n \ge
\rho$.  
Given an LRS $\langle u \rangle$, its \textit{characteristic
  polynomial} is $p_{\langle u \rangle}(x) = x^k -
  \sum_{i=0}^{k-1}a_{i}x^{i}$. We can factorize the characteristic
  polynomial as $p_{{\langle u \rangle}}(x) =
  \prod_{j=1}^d(x-\lambda_j)^{\rho_j}$, where $\lambda_j$ is a root,
  called a \textit{characteristic root} of \emph{algebraic multiplicity} $\rho_j$. An LRS
  is called \emph{simple} if $\rho_j = 1$ for all $j$, i.e. all
  characteristic roots are distinct. Let $\{\lambda_1,
  \lambda_2, \ldots , \lambda_d \}$ be distinct roots of
  $p_{\langle u \rangle}(x)$ with multiplicities $\rho_1, \rho_2, \ldots ,
  \rho_d$ respectively. Then the $n^{th}$ term of the LRS,
  denoted $u_n$, can be expressed as $u_n =
  \sum_{j=1}^{d}q_j(n)\lambda_j^n$, where $q_j(x) \in \mathbb{C}(x)$
  are univariate polynomials of degree at
  most $\rho_j - 1$  with complex coefficients such that $\sum_{j=1}^d\rho_j = k$. This representation of an LRS is known as the \textit{exponential polynomial solution} representation.
  
 \begin{wrapfigure}[5]{l}{0.35\textwidth} 
  \begin{minipage}{0.3\textwidth}
    \vspace*{-1.85cm}
  \begin{equation*}
    M_{\langle u \rangle} = \begin{bmatrix}
        a_{k-1} & 1 & \ldots & 0 & 0 \\
        \vdots & \vdots & \ddots & \vdots & \vdots \\
        a_2 & 0 & \ldots & 1 & 0 \\
        a_1 & 0 & \ldots & 0 & 1 \\
        a_0 & 0 & \ldots & 0 & 0
    \end{bmatrix}
  \end{equation*}
\vspace*{-0.7cm}
\end{minipage}
\caption{Companion matrix}
\label{fig:comp-matrix}
\end{wrapfigure}
\vspace*{0.8cm} It is well known that scaling an LRS by a constant gives another LRS, and
  the sum and product of two LRSs is also an LRS (Theorem 4.1
  in~\cite{d2ad832a2cc24904b7936daaa5c1c17d}). Given an LRS $\langle u \rangle$ defined by $ u_n = a_{k-1}u_{n-1} +
\ldots + a_1u_{n-k-1} + a_0u_{n-k}$, we define its \emph{companion
  matrix} $M_{\langle u \rangle}$ to be the $k \times k$ matrix shown in Fig.~\ref{fig:comp-matrix}.
When $\langle u \rangle$ is clear from
the context, we often omit the subscript for clarity of notation, and
use $M$ for $M_{\langle u \rangle}$. 
Let $\mathbf{u} = (u_{k-1},
\ldots, u_0)$  be a row vector containing the
$k$ initial values of the recurrence, and let $\mathbf{e_k} = (0,
0, \ldots 1)^T$ be a column vector of $k$ dimensions with the last
element equal to $1$ and the rest set to $0$s. It is easy to see that for all $n \ge 1$, $\mathbf{u} M^n \mathbf{e_k}$ gives $u_n$.  Note that the eigenvalues of the matrix $M$ are exactly the roots of the characteristic polynomial of the
LRS $\langle u \rangle$.

For $\mathbf{u} = (u_{k-1}, \ldots, u_0)$, we call the matrix
$G_{\langle u \rangle}= \begin{bmatrix} 0 & \mathbf{u} \\
             \mathbf{0}^T & M_{\langle u \rangle}
\end{bmatrix}$
the \emph{generator matrix} of the LRS $\langle u \rangle$, where
$\mathbf{0}$ is a $k$-dimensional vector of all $0$s. We omit the subscript and use $G$ instead of $G_{\langle u
  \rangle}$, when the LRS $\langle u \rangle$ is clear from the
context.    It is easy to show from the
above that $u_n = G^{n+1}[1,k+1]$ for all $n \geq 0$.

We say that an LRS $\langle u \rangle$ is
\emph{ultimately non-negative} (resp. \emph{ultimately positive})
iff there exists $N > 0$, such that $\forall n \ge N$, $u_n \geq 0$
(resp. $u_n > 0$)\footnote{\emph{Ultimately non-negative} (resp. \emph{ultimately positive})
LRS, as defined by us, have also been called \emph{ultimately positive}
(resp. \emph{strictly positive}) LRS elsewhere in the
literature \cite{ouaknine2017ultimate}.  However, we choose to use terminology that is consistent across matrices and LRS, to avoid notational
confusion.}.

The problem of determining whether a given LRS is ultimately
non-negative (resp. ultimately positive) is called the \emph{Ultimate
Non-negativity} (resp. \emph{Ultimate Positivity}) problem for LRS.
We use {\UPos}
(resp. {\SUPos}) to refer to this problem.  It is
known~\cite{halava2005skolem} that {\UPos} and {\SUPos} are
polynomially inter-reducible, and these problems have been widely studied in the literature (e.g., \cite{DBLP:conf/fct/Ouaknine13}, \cite{ouaknine2017ultimate}, \cite{DBLP:journals/siglog/OuaknineW15}). 
A closely related problem is the \emph{Skolem problem}, wherein
we are given an LRS $\langle u \rangle$ and we are required to
determine if there exists $n \geq 0$ such that $u_n = 0$.  The relation
between the Skolem problem and {\UPos} (resp. {\SUPos}) has been
extensively studied in the literature (e.g.,~\cite{halava2005skolem}, ~\cite{halava2006positivity}, ~\cite{10.1145/301250.301389}).

\section{Hardness of eventual non-negativity \& positivity}
\label{sec:ENIdhard}

 In this section, we show that  \UPos{} (resp. \SUPos{})  polynomially reduces to  \ENSum{} (resp. \EPSum{}) when $|\AAA| \ge 2$.  Since  \UPos{} and \SUPos{} are known to be $\mathsf{coNP}$-hard (in fact, as hard as the decision problem for the universal theory of reals Theorem 5.3~\cite{ouaknine2017ultimate}), we conclude that {\ENSum} and {\EPSum} are also {\mycoNP}-hard and at least as hard as the decision problem for the universal theory of reals, when $|\AAA| \ge 2$.  Thus, unless $\mathsf{P=NP}$, there is no hope of finding polynomial-time solutions to these problems. 

\begin{theorem}
\label{lma : UNeg_red_ENSum}
    {\UPos} reduces to {\ENSum} with $|\AAA| \ge 2$ in polynomial time. 
\end{theorem}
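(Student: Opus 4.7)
The plan is to give a direct polynomial-time reduction from \UPos{} to \ENSum{} using two matrices derived from the generator-matrix representation of the input LRS. Given an instance $\langle u \rangle$ of \UPos{} with companion matrix $M$ and initial-value row vector $\mathbf{u} = (u_{k-1},\ldots,u_0)$, I would construct the two $(k+1)\times(k+1)$ rational matrices
\[
A_1 = \begin{bmatrix} 0 & \mathbf{u} \\ \transp{\mathbf{0}} & M \end{bmatrix}, \qquad A_2 = \begin{bmatrix} 0 & \mathbf{0} \\ \transp{\mathbf{0}} & M \end{bmatrix},
\]
and output the \ENSum{} instance $\AAA = \{(1, A_1),(-1, A_2)\}$, which has $|\AAA| = 2$ as required. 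The construction is trivially polynomial-time, since the entries of $A_1$ and $A_2$ are either zeros or numbers already appearing in the description of $\langle u \rangle$. Observe that $A_1$ is precisely the generator matrix $G_{\langle u \rangle}$, while $A_2$ differs only by having a zero first row; the role of $A_2$ is to cancel out the lower-right $M^n$ block of $A_1^n$ in the weighted sum.

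The core of the proof is a one-line induction on $n$ showing that for $n \geq 1$,
\[
A_1^n = \begin{bmatrix} 0 & \mathbf{u} M^{n-1} \\ \transp{\mathbf{0}} & M^n \end{bmatrix}, \qquad A_2^n = \begin{bmatrix} 0 & \mathbf{0} \\ \transp{\mathbf{0}} & M^n \end{bmatrix},
\]
which follows from block-matrix multiplication together with the fact that $\transp{\mathbf{0}}\mathbf{u}$ is the zero matrix. Subtracting then gives
\[
\msum{\AAA}{n} = A_1^n - A_2^n = \begin{bmatrix} 0 & \mathbf{u} M^{n-1} \\ \transp{\mathbf{0}} & \mathbf{0} \end{bmatrix},
\]
so every entry of $\msum{\AAA}{n}$ is either $0$ or a component of the row vector $\mathbf{u} M^{n-1}$. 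A routine computation using the shape of the companion matrix shows that $\mathbf{u} M^{n-1} = (u_{n+k-2}, u_{n+k-3}, \ldots, u_{n-1})$; in particular the $(1, k+1)$ entry of $\msum{\AAA}{n}$ equals $\mathbf{u} M^{n-1}\mathbf{e_k} = u_{n-1}$.

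From this the equivalence is immediate. If $\langle u \rangle$ is ultimately non-negative with witness $N$, then for every $n \geq N+1$ all non-zero entries of $\msum{\AAA}{n}$ are values $u_m$ with $m \geq N$, so $\AAA$ is eventually non-negative. Conversely, if $\msum{\AAA}{n}$ is non-negative for all $n \geq N'$, then looking only at the $(1, k+1)$ entry yields $u_{n-1} \geq 0$ for all $n \geq N'$, so $\langle u \rangle$ is ultimately non-negative. I do not anticipate any real obstacle here; the only slightly delicate point is the identification of the components of $\mathbf{u} M^{n-1}$ as consecutive LRS terms, which follows by a standard induction from the form of the companion matrix. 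Finally, note that the use of the negative weight $-1$ is essential -- this is precisely why the hardness requires $|\AAA| \geq 2$; any larger cardinality is handled by padding $\AAA$ with pairs of weight $0$.
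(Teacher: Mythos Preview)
Your proof is correct but takes a genuinely different route from the paper's. Both constructions set $A_1$ to the generator matrix $G_{\langle u\rangle}$, so that the first row of $A_1^n$ carries the block $\mathbf{u}M^{n-1}$ of consecutive LRS terms; the difference is in how the remaining $M^n$ block is neutralised. You take $A_2$ to have the same $M$ in the lower-right corner and assign it weight $-1$, so that $A_1^n - A_2^n$ cancels the $M^n$ block exactly, leaving only zeros and LRS terms. The paper instead keeps both weights equal to $1$ and replaces $M$ by a nonnegative matrix $P$ that entrywise dominates $|M|$ (obtained by replacing the first column of $M$ by $a_{\max} = \max_i |a_i|$); then $P^n + M^n \ge 0$ for all $n$, so the lower-right block is harmless rather than zero. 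Your cancellation trick is shorter and avoids the domination argument entirely; on the other hand, the paper's construction establishes the slightly sharper statement that \ENSum{} is already \UPos{}-hard when all weights are $+1$, which matters for the applications in Section~\ref{sec:motiv} where the $w_i$ are probabilities or nonnegative costs. Your padding with weight-$0$ pairs to reach $|\AAA| > 2$ is fine for the theorem as stated, though the paper's version extends to $|\AAA| > 2$ while keeping all weights equal to $1$.
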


\begin{proof}
    Given an LRS $\langle u \rangle$ of order $k$ defined by the recurrence $u_n = a_{k-1}u_{n-1} + \ldots + a_1u_{n-k-1} + a_0u_{n-k}$, we show how to construct two   $k\times k$ matrices $A_1$ and $A_2$ such that $\langle u \rangle$ is ultimately non-negative iff $(A_1^n + A_2^n)$ is  eventually non-negative i.e we reduce \UPos{} for $\langle u \rangle$ to \ENSum{} for $\AAA = \{ (1,A_1), (1, A_2) \}$ .  
    
    Consider $A_1$ to be the generator matrix of $\langle u \rangle$, i.e. $A_1= \begin{bmatrix} 0 & \mathbf{u} \\ \mathbf{0}^T & M \end{bmatrix}$ where $M$ is the companion matrix of $\langle u \rangle$ (see Fig.~\ref{fig:comp-matrix}) and $\mathbf{u} = (u_{k-1}, \ldots, u_1, u_0)$ is the vector of initial values of $\langle u \rangle$. With abuse of notation, let $|x|$ denote the absolute value of a rational number $x$. Now define a number $a_{max} = max(|a_0|,|a_1|, \ldots, |a_{k-1}|)$, where $a_0, a_1,\ldots,a_{k-1}$ are the coefficients of the recurrence defining the LRS $\langle u \rangle$. Construct a matrix $P$ from matrix $M$ by replacing each coefficient of $\langle u \rangle$ in the first column of $M$ (see Fig.~\ref{fig:comp-matrix}) by $a_{max}$,  and by keeping the remaining entries of $M$ unchanged. Clearly, $P$ can be constructed in time polynomial in the size of the given LRS instance $\langle u \rangle$. Finally, construct the matrix $A_2$ as follows: $A_2= \begin{bmatrix} 0 & \mathbf{0} \\ \mathbf{0}^T & P \end{bmatrix}$. 
    Now consider the sequence of matrices defined by $A_1^n + A_2^n$, for all $n \geq 1$. By properties of the generator matrix, it is easily verified that $ A_1^n = \begin{bmatrix} 0 & \mathbf{u}M^{n-1} \\ \mathbf{0}^T & M^n \end{bmatrix}$.  In a similar manner, it can be verified that $A_2^n = \begin{bmatrix} 0 & \mathbf{0} \\ \mathbf{0}^T & P^n \end{bmatrix}$.  Therefore,
$A_1^n + A_2^n =$ $\begin{bmatrix} 0 & \mathbf{u}M^{n-1} \\ \mathbf{0}^T & P^n + M^n \end{bmatrix}$, for all $n \ge 1$.
    
    We claim  that the $n \times n$ matrix $P^n + M^n$ is always non-negative. To see why this is so, notice that $P$ has only non-negative entries, and  each entry is at least as large as the absolute value of the corresponding entry of $M$ i.e. $P[i,j] \geq |M[i,j]| \ge 0$ for all $i, j \in \{1, \ldots k\}$. Hence, $P^n[i,j] \geq |M^n[i,j]|\geq -M^n[i,j]$, and  $P^n[i,j] + M^n[i,j] \ge 0$ for all $i, j \in \{1, \ldots k\}$ and for all $n \ge 1$. Finally, we show that $A_1^n + A_2^n$ is eventually non-negative iff $\langle u \rangle$ is ultimately non-negative, which would complete the proof.
    
    Suppose $\langle u \rangle$ is ultimately non-negative. Observe that the elements of the vector $\mathbf{u}M^{n-1}$ are essentially $(u_{n+k-2} \ldots,u_{n},u_{n-1})$. Hence, if $\langle u \rangle$ is ultimately non-negative, there exists $N \geq 1$ such that the vector $\mathbf{u}M^{n-1} = (u_{n+k-2} \ldots,u_{n},u_{n-1})$ has all non-negative components, for all $n \geq N$. Thus, for all $n \geq N$, the elements of the matrix $(A_1^n + A_2^n)$ are non-negative. In other words,  $(A_1^n + A_2^n)$ is eventually non-negative.
    
    Now, suppose $(A_1^n + A_2^n)$ is eventually non-negative. Recall from above that $A_1^n + A_2^n = \begin{bmatrix} 0 & \mathbf{u}M^{n-1} \\ \mathbf{0}^T & M^n + P^n \end{bmatrix}$.  We have already seen  that all entries of $M^n + P^n$ are non-negative for all $n \geq 1$. Therefore, since $(A_1^n + A_2^n)$ is eventually non-negative, there must exist $N > 0$ such that all components of $\mathbf{u}M^{n-1}$ are non-negative for all $n \geq N$. However, we also know that $\mathbf{u}M^{n-1} = (u_{n+k-2} \ldots,u_{n},u_{n-1})$.  Hence, all elements in the sequence $\langle u \rangle$, starting from $u_{N - 1}$ are non-negative.  This implies that the LRS $\langle u \rangle$ is ultimately non-negative. \qed
  
\end{proof}

Observe that the same reduction technique works if we are required to use more than 2 matrices in {\ENSum}. Indeed, we can construct matrices $A_3, A_4, \ldots, A_m$ similar to the construction of $A_2$ in the reduction above, by having the $k \times k$ matrix in the bottom right (see definition of $A_2$) to have positive values greater than the maximum absolute value of every element in the companion matrix.

A simple modification of the above proof setting $A_2 = \begin{bmatrix} 1 & \mathbf{0} \\ \mathbf{1}^T & P \end{bmatrix}$, where $\mathbf{1}$ denotes the $k$-dimensional vector of all $1$'s gives us the corresponding hardness result for \EPSum{}. We formalize this same as a theorem below : 

\begin{restatable}{theorem}{UPredEPSum}
\label{lma : UP_red_EPSum}
    {\SUPos} reduces to {\EPSum} with $|\AAA| \ge 2$ in polynomial time. 
\end{restatable}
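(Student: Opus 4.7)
The plan is to adapt the reduction from Theorem~\ref{lma : UNeg_red_ENSum} by modifying $A_2$ so that the entries which were exactly zero in $A_1^n + A_2^n$ become strictly positive, without disturbing the LRS signal in the top row. I would keep $A_1$ identical to the earlier construction, namely the generator matrix of $\langle u \rangle$; its $n$th power equals $\begin{bmatrix} 0 & \mathbf{u}M^{n-1} \\ \mathbf{0}^T & M^n \end{bmatrix}$, exhibiting the LRS terms $(u_{n+k-2}, \ldots, u_{n-1})$ in the top row, zeros in the top-left entry and first column below the top, and $M^n$ (which may have negative entries) in the bottom-right.

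Following the hint in the theorem statement, I would set $A_2 = \begin{bmatrix} 1 & \mathbf{0} \\ \mathbf{1}^T & P \end{bmatrix}$, where $\mathbf{1}$ is the $k$-dimensional column vector of ones and $P$ is a non-negative $k \times k$ matrix chosen carefully. Because the top row of $A_2$ is $(1, \mathbf{0})$, a block-matrix induction yields $A_2^n = \begin{bmatrix} 1 & \mathbf{0} \\ (I + P + \cdots + P^{n-1})\mathbf{1}^T & P^n \end{bmatrix}$. Adding $A_1^n$ and $A_2^n$ then gives top-left entry $1$, top row right of the first entry equal to $\mathbf{u}M^{n-1}$, first column below the top equal to $(I + P + \cdots + P^{n-1})\mathbf{1}^T \geq \mathbf{1}^T > 0$, and bottom-right block $M^n + P^n$. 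Thus the top-left entry and bottom-left column are strictly positive for every $n \geq 1$, and the top row is eventually strictly positive iff $\langle u \rangle$ is ultimately strictly positive.

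The main obstacle is to ensure the bottom-right block $M^n + P^n$ is eventually strictly positive. The companion-based $P$ used in the previous proof satisfies $P^n + M^n \geq 0$ always, but strict positivity can fail in edge cases where the spectral radius of $M$ coincides with that of $P$ and $M$ has a ``bad'' boundary eigenvalue (for instance, $M = \begin{bmatrix} -1 & 1 \\ 1 & 0 \end{bmatrix}$ yields $M^n + P^n$ with alternating zero entries forever). The cleanest remedy is to strengthen $P$ so that its entries strictly dominate those of $|M|$; concretely, take $P$ to be the $k \times k$ rank-one matrix $(a_{max}+1) J_k$, where $J_k$ is the all-ones matrix. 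Then $P^n[i,j] = k^{n-1}(a_{max}+1)^n$, while iterating the triangle inequality gives $|M^n[i,j]| \leq k^{n-1} a_{max}^n$, so $P^n[i,j] + M^n[i,j] \geq k^{n-1}\bigl((a_{max}+1)^n - a_{max}^n\bigr) > 0$ for every $n \geq 1$ and every entry $(i,j)$. With this choice, $A_1^n + A_2^n$ is eventually strictly positive iff $\langle u \rangle$ is ultimately strictly positive, completing the polynomial reduction from \SUPos{} to \EPSum{} with $|\AAA| \ge 2$; as in the previous theorem, the construction trivially extends to $|\AAA| > 2$ by appending further copies of a strictly dominating matrix.
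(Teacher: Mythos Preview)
Your approach is essentially the paper's: same $A_1$, same block shape $A_2 = \begin{bmatrix} 1 & \mathbf{0} \\ \mathbf{1}^T & P \end{bmatrix}$, same analysis of $A_1^n + A_2^n$. The only difference is that the paper leaves $P$ as ``any all-positive matrix strictly dominating $M$ entrywise'' while you commit to the concrete choice $P = (a_{max}+1)J_k$; your block formula for $A_2^n$ is in fact more accurate than the one written in the paper.

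There is one small slip in your bound on $|M^n[i,j]|$. The companion matrix $M$ has $1$'s on its superdiagonal, so the entrywise bound is $\max_{i,j}|M[i,j]| = \max(1,a_{max})$, not $a_{max}$. If all $|a_i| < 1$ your inequality $|M^n[i,j]| \le k^{n-1} a_{max}^n$ is false (e.g.\ $k=2$, $a_0=a_1=0.1$ gives spectral radius $\approx 0.37$, not $\le 0.1$). The fix is immediate: set $c = \max(1,a_{max})$ and take $P = (c+1)J_k$; then $P^n[i,j] = k^{n-1}(c+1)^n > k^{n-1}c^n \ge |M^n[i,j]|$ and the rest of your argument goes through unchanged.
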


\begin{proof}
     Given an LRS $\langle u \rangle$ of order $k$ .We construct a set of two matrices $A_1$ and $A_2$, such that $\langle u \rangle$ is ultimately non-negative iff the sum of powers of $A_1$ and $A_2$ are eventually non-negative. Consider $A_1$ to be the generator matrix of $\langle u \rangle$ i.e $A_1= \begin{bmatrix} 0 & \mathbf{v} \\ \mathbf{0}^T & M \end{bmatrix}$ where $M$ is the companion matrix of $\langle u \rangle$ and $\mathbf{v} = (u_{k-1}, \ldots, u_1, u_0)$ is the vector containing the set of initial values of $\langle u \rangle$. Define a number $a_{max}$ as follows : $a_{max} = max(|a_0|,|a_1|, \ldots, |a_{k-1}|)$ where $a_0,a_1,\ldots,a_{k-1}$ are the coefficients of the LRS $\langle u \rangle$(Here, $|x|$ denotes the absolute value of a rational number $x$). Construct a matrix matrix $P$ by replacing the coefficients of $\langle u \rangle$ from $M$ by some element greater than $a_{max}$. $P$ can be constructed in polynomial time w.r.t $max(|a_0|,|a_1|, \ldots, |a_{k-1}|)$. Construct $P$, such that each element of $P$ is positive. Now, construct the matrix $A_2$ as follows : $A_2= \begin{bmatrix} 1 & \mathbf{0} \\ \mathbf{1}^T & P \end{bmatrix}$. Now, consider the matrix $A(n) = A_1^n + A_2^n$ for all $n \geq 1$. $A(n)= \begin{bmatrix} 1 & \mathbf{v}M^{n-1} \\ \mathbf{1}^T.P^{n-1} & P^n + M^n \end{bmatrix}$(since, $ A_1^n = \begin{bmatrix} 0 & \mathbf{v}M^{n-1} \\ \mathbf{0}^T & M^n \end{bmatrix}$ by the properties of the generator matrix). 
    
    Observe that the $n \times n$ matrix $P^n + M^n$ is always positive. Since $P$ is a matrix with only positive entries where each entry is greater than the corresponding entry of $M$ i.e $P[i,j] > M[i,j]$. Hence, $P^n[i,j] > M^n[i,j]$, and no entry of the sum $P^n[i,j] + M^n[i,j]$ can therefore become zero or negative. Using this fact now, we show that the sum of powers of the set of matrices $A_1 = \{A_1, A_2\}$ is eventually positive iff $\langle u \rangle$ is ultimately positive. 
    
    Consider $\langle u \rangle$ is ultimately positive. Observe that the elements of the vector $\mathbf{v}M^{n-1}$ are essentially $(u_{n+k-2} \ldots,u_{n},u_{n-1})$. Hence, if $\langle u \rangle$ is ultimately positive, there exists $n_0 \geq 1$ such that the vector  $\mathbf{v}M^{n-1} = (u_{n+k-2} \ldots,u_{n},u_{n-1}) > 0$ for all $n \geq n_0$. Thus, for all $n \geq n_0$, the elements of the matrix is $A(n)$ are greater than $0$. Therefore if $\langle u \rangle$ is ultimately positive, the set of matrices $\{A_1, A_2\}$ is eventually positive.
    
    Now, consider the set of matrices $\{A_1, A_2\}$ to be eventually positive. We have $ A_1^n = \begin{bmatrix} 0 & \mathbf{v}M^{n-1} \\ \mathbf{0}^T & M^n \end{bmatrix}$ and $A_2^n = \begin{bmatrix} 1 & \mathbf{0} \\ \mathbf{1}^T.P^{n-1} & P^n \end{bmatrix}$. For all $n \geq n_0$, we have $A(n) = A_1^n + A_2^n$. We have already observed that $M^n + P^n$ is positive for all $n \geq 1$. Now, since, $A(n)$ is eventually positive, we have that $\mathbf{v}M^{n-1} > \mathbf{0}$ for all $n \geq n_0$. i.e for all $n \geq n_0$, the vector $\mathbf{v}M^{n-1} = (u_{n+k-2} \ldots,u_{n},u_{n-1}) > 0$. Hence, the sequence, $u_{n_0 - 1}, u_{n_0}, \ldots, \ldots$ is a positive sequence. Therefore, for all $n \geq n_0$, the LRS $\langle u \rangle$ is positive which implies that $\langle u \rangle$ is ultimately positive. The set of matrices $\{A_1, A_2\}$ being eventually positive implies that the LRS $\langle u \rangle$ is ultimately positive.
    
    Hence, {\SUPos} reduces to {\EPSum} in polynomial time as can be seen from the construction above.
\end{proof}

We remark that for the reduction technique used in Theorems \ref{lma : UNeg_red_ENSum} and \ref{lma : UP_red_EPSum} to work, we need at least two (weight, matrix) pairs in $\AAA$. To show why this reduction doesn't work when $|\AAA|=1$, we give a counterexample. Consider the following case : 

Let's say we have the LRS $\langle u \rangle$ defined by the recurrence $u_n = 2u_{n-1} - u_{n-2}$, with initial values $u_1 = 0, u_0 = -1$. The sequence of elements in the recurrence is $-1,0,1,2,3,\ldots$. It is easy to show inductively that this is a strictly increasing sequence. Hence, the LRS is ultimately positive. What happens if we apply the  reduction technique used in  the proofs of Theorems \ref{lma : UNeg_red_ENSum} and \ref{lma : UP_red_EPSum} to the {\UPos} problem for this specific LRS, while restricting  $|\AAA|$ to be $1$? Let the generator matrix of $\langle u \rangle$ be denoted by $A$.  We ask if $A$ is eventually non-negative.  Recalling the definition of a generator matrix, it follows that the first row of $A^n$ is eventually non-negative iff the LRS $\langle u \rangle$ is ultimately non-negative. Moreover, the first column of $A^n$ is always non-negative (in fact, has all $0$ entries) for all $n > 0$. Hence, eventual non-negativity of $A$ depends only on the eventual non-negativity of the companion matrix $M \in \mathbb{Q}^{k \times k}$ that occupies the bottom right corner of $A$ (In this case, $M$ is a $2 \times 2$ matrix). Observe that for this specific example, $M = \begin{bmatrix} 2 & 1 \\ -1 & 0 \end{bmatrix}$ and $M^n =  \begin{bmatrix} n+1 & n \\ -n & 1-n \end{bmatrix} $.  Clearly, $M^n$ has at least one negative entry for all $n > 0$, and therefore, $A^n$ cannot be eventually non-negative.  This failure only shows that the technique used in the proofs of Theorems \ref{lma : UNeg_red_ENSum} and \ref{lma : UP_red_EPSum} cannot be directly used to prove hardness of {\ENId} or {\EPosId}. Indeed, \EPosId{} can be solved in polynomial time and showing hardness of \ENId{} in the general case is still an open problem.

Having shown the hardness of \ENSum{} and \EPSum{} when $|\AAA| \ge 2$, we now proceed to establish upper bounds on the computational complexity of  these problems.

\section{Upper bounds on eventual non-negativity \& positivity}
\label{sec : ENId reduces to Ultimate Positivity}

In this section, we show that \ENSum{} (resp. \EPSum{}) is polynomially reducible to \UPos{} (resp. \SUPos{}), regardless of $|\AAA|$. The central result of this section is formalized in the following theorem.
\begin{theorem}
\label{thm : ENSum to UPos}
    {\ENSum}, reduces to {\UPos} in polynomial time.
\end{theorem}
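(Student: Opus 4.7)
The approach is to view each of the $k^2$ entries of $\msum{\AAA}{n}$ as a linear recurrence sequence in $n$, and then compress the resulting $k^2$ ultimate-non-negativity conditions into a single $\UPos$ instance via interleaving of LRSs.

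Concretely, fix $\AAA = \{(w_1, A_1), \ldots, (w_m, A_m)\}$ with each $A_i \in \mathbb{Q}^{k\times k}$. By the Cayley--Hamilton theorem, for each $i$ the sequence of matrix powers $(A_i^n)_{n\ge 0}$ satisfies the recurrence corresponding to the characteristic polynomial of $A_i$, so every scalar entry $A_i^n[r,c]$ is an LRS of order at most $k$ whose recurrence and first $k$ values are polynomial-time computable from $A_i$. Using the closure of LRSs under scalar multiplication and pointwise sum (Theorem 4.1 of~\cite{d2ad832a2cc24904b7936daaa5c1c17d}), the entry-wise sequence $f_{r,c}(n) = \sum_{i=1}^m w_i \cdot A_i^n[r,c]$ is an LRS of order at most $mk$, with all defining data computable in polynomial time. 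Immediately from definitions, $\AAA$ is eventually non-negative iff each of the $k^2$ LRSs $\{f_{r,c}\}_{1 \le r,c \le k}$ is ultimately non-negative.

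To collapse this conjunction into a single $\UPos$ query, I would interleave the $k^2$ sequences. Put $t = k^2$, fix any bijection $j \mapsto (r_j, c_j)$ between $\{0, \ldots, t-1\}$ and $\{1, \ldots, k\}^2$, and define $\langle g \rangle$ by $g_{tn+j} = f_{r_j, c_j}(n)$. Writing $\sigma$ for the shift operator and $p_j(x)$ for the characteristic polynomial of $f_{r_j, c_j}$, the $t$-skip of $\langle g \rangle$ starting at offset $j$ is precisely $f_{r_j, c_j}$, so $p_j(\sigma^t)$ kills $\langle g \rangle$ on the positions $\equiv j \pmod t$; since the factors commute and every residue class is covered, the product $\prod_{j=0}^{t-1} p_j(\sigma^t)$ annihilates $\langle g \rangle$ everywhere. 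Thus $\langle g \rangle$ is an LRS of polynomial order $O(k^4 m)$, its recurrence and initial values are polynomial-time computable, and by construction $\langle g \rangle$ is ultimately non-negative iff every $f_{r,c}$ is. This is the desired polynomial-time many-one reduction from $\ENSum$ to $\UPos$.

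The main obstacle is not conceptual but a bit of bookkeeping: verifying that $\langle g \rangle$ meets the paper's LRS definition --- in particular that the constant coefficient of its recurrence is nonzero, which can be arranged by stripping any spurious $x$-factors and redefining the initial segment accordingly --- and bounding the bit-sizes of the resulting coefficients and initial values. Both are routine consequences of the LRS closure results already invoked. If a many-one reduction is not required, one can skip interleaving entirely and simply output the $k^2$ LRSs as $k^2$ independent $\UPos$ instances, giving a polynomial-time conjunctive/Turing reduction that suffices for the complexity transfers used later in the paper.
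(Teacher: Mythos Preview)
Your proposal is correct and uses the same ingredients as the paper---Cayley--Hamilton to make each matrix entry an LRS, closure of LRSs under scaling and addition, and interleaving to collapse the $k^2$ ultimate-non-negativity conditions into a single \UPos{} instance. The only difference is the order of operations: the paper interleaves first (the $k^2$ entry-LRSs of each individual $A_i$, which all share the characteristic polynomial $p_{A_i}$ and hence fit Definition~\ref{def:lrs-int} directly) and \emph{then} sums the $m$ scaled interleaved LRSs; you sum first (across the $m$ matrices, per entry) and then interleave the $k^2$ sequences, which is why you need the more general annihilator argument $\prod_j p_j(\sigma^t)$. In fact all your $f_{r,c}$ can be taken to satisfy the common recurrence given by $\mathrm{lcm}(p_{A_1},\ldots,p_{A_m})$, so the product is unnecessary and the order drops to $O(mk^3)$; with that observation the two constructions produce literally the same LRS $\langle a^\star\rangle$.
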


We complete the proof in two parts. First, we show that for a single matrix $A$, we can construct a linear recurrence $\langle a \rangle$ such that $A$ is eventually non-negative iff $\langle a \rangle$ is ultimately non-negative. Then, we show that  starting from such a linear recurrence for each matrix in $\AAA$, we can construct a new LRS, say $\langle a^\star \rangle$, with the property that the weighted sum of powers of the matrices in $\AAA$ is eventually non-negative iff $\langle a^\star \rangle$ is ultimately non-negative.  Our proof makes crucial use of the following  property of matrices.
\begin{lemma}[Adapted from Lemma 1.1 of~\cite{halava2005skolem}]
\label{lma:halava}
Let $A \in \mathbb{Q}^{k\times k}$ be a rational matrix with characteristic polynomial  $p_A(\lambda) = det(A - \lambda I)$.  Suppose we define the sequence $\langle a^{ij} \rangle$ for every $1 \leq i,j \leq k$ as follows: $a^{i,j}_n = A^{n+1}[i,j]$, for all $n \ge 0$.  Then $\langle a^{i,j} \rangle$ is an LRS of order $k$ with characteristic polynomial $p_A(x)$ and initial values given by $a^{ij}_0 = A^1[i,j], \ldots a^{ij}_{k-1} = A^k[i,j]$.
\end{lemma}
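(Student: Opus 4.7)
The plan is to derive the lemma as a direct consequence of the Cayley--Hamilton theorem. Write the characteristic polynomial as $p_A(\lambda) = \lambda^k - c_{k-1}\lambda^{k-1} - c_{k-2}\lambda^{k-2} - \cdots - c_1\lambda - c_0$, so that proving $\langle a^{ij}\rangle$ is an LRS with characteristic polynomial $p_A(x)$ amounts to showing it satisfies the recurrence $a^{ij}_n = c_{k-1}a^{ij}_{n-1} + \cdots + c_1 a^{ij}_{n-k+1} + c_0 a^{ij}_{n-k}$ for all $n \geq k$, with the prescribed initial values for $n = 0, \ldots, k-1$.

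First I would verify the initial conditions. By the very definition $a^{ij}_n = A^{n+1}[i,j]$, we immediately get $a^{ij}_0 = A^1[i,j], a^{ij}_1 = A^2[i,j], \ldots, a^{ij}_{k-1} = A^k[i,j]$, so there is nothing to prove here beyond unwinding notation.

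Next I would establish the recurrence. By Cayley--Hamilton, $p_A(A) = 0$, i.e. $A^k = c_{k-1}A^{k-1} + c_{k-2}A^{k-2} + \cdots + c_1 A + c_0 I$. For any $n \geq k$, right-multiply both sides by $A^{n-k+1}$ (equivalently, left-multiply; both give the same identity since powers of a fixed matrix commute) to obtain
\[
A^{n+1} \;=\; c_{k-1}A^{n} + c_{k-2}A^{n-1} + \cdots + c_1 A^{n-k+2} + c_0 A^{n-k+1}.
\]
Reading off the $(i,j)$ entry and translating $A^{m+1}[i,j] = a^{ij}_m$ gives exactly the target recurrence $a^{ij}_n = c_{k-1}a^{ij}_{n-1} + \cdots + c_0 a^{ij}_{n-k}$, with the same coefficients $c_0, \ldots, c_{k-1}$ for every pair $(i,j)$.

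There is no real obstacle here; the only subtleties to flag are (i) that for $\langle a^{ij}\rangle$ to count as an LRS of order exactly $k$ in the sense of Section~\ref{sec:prelim}, one needs $c_0 \neq 0$, which corresponds to $A$ being invertible --- a harmless technicality, since if $A$ is singular one may either allow $a_0 = 0$ in the definition or observe that the sequence still satisfies a recurrence of the stated shape and the downstream arguments only use the exponential-polynomial form of the solution; and (ii) the coefficient signs must be read consistently with the convention $p_{\langle u\rangle}(x) = x^k - \sum a_i x^i$ used in the preliminaries, so that the recurrence coefficients $c_i$ are precisely the negatives of the non-leading coefficients of $p_A$, matching the paper's convention. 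With these bookkeeping points handled, the lemma follows in a few lines.
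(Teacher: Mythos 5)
Your proof is correct and follows essentially the same route as the paper's: invoke Cayley--Hamilton, multiply the resulting matrix identity by a suitable power of $A$, and read off the $(i,j)$ entry to get the recurrence with the coefficients of $p_A$. Your remark about the $c_0 \neq 0$ requirement (i.e.\ singular $A$) is a fair bookkeeping point that the paper glosses over, but it does not change the substance of the argument.
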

The proof of this Lemma follows from the Cayley-Hamilton Theorem and the reader is referred to~\cite{halava2005skolem} for further details.

From Lemma~\ref{lma:halava}, it is easy to see that the LRS $\langle a^{i,j} \rangle$ for all $ 1 \leq i, j \leq k$ share the same order and characteristic polynomial (hence the defining recurrence) and differ only in their initial values.  For notational convenience, in the following discussion, we say that the LRS $\langle a^{i,j} \rangle$ is \emph{generated by $A[i,j]$}.
   
\begin{restatable}{proposition}{lmaENredUPosksq}
    \label{clm : EN red UPos for k^2 LRS's}
    A matrix $A \in \mathbb{Q}^{k \times k}$ is eventually non-negative iff all LRS $\langle a^{i,j} \rangle$ generated by $A[i,j]$ for all $1 \leq i,j \leq k$ are ultimately non-negative. 
\end{restatable}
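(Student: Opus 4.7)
The plan is to prove this equivalence as a direct consequence of Lemma~\ref{lma:halava}, which gives us a concrete correspondence between powers of $A$ and the LRS values: $a^{i,j}_n = A^{n+1}[i,j]$ for all $1 \le i,j \le k$ and all $n \ge 0$. Once this correspondence is firmly in hand, the proposition reduces to translating between the single threshold $N$ that witnesses eventual non-negativity of $A$ and the $k^2$ individual thresholds $N_{i,j}$ that witness ultimate non-negativity of the LRS $\langle a^{i,j} \rangle$.

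For the forward direction, I would assume $A$ is eventually non-negative, so there exists $N \ge 1$ such that $A^n[i,j] \ge 0$ for all $n \ge N$ and all $i,j$. By Lemma~\ref{lma:halava}, $a^{i,j}_n = A^{n+1}[i,j]$, so setting $N' = N - 1$ (or $N' = 0$ if $N = 1$) gives $a^{i,j}_n \ge 0$ for all $n \ge N'$, simultaneously for every $i,j$. Hence each LRS $\langle a^{i,j} \rangle$ is ultimately non-negative, with a common witnessing index.

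For the reverse direction, I would assume each of the $k^2$ LRS $\langle a^{i,j} \rangle$ is ultimately non-negative. Let $N_{i,j}$ be a threshold from which $a^{i,j}_n \ge 0$. Since there are only finitely many pairs $(i,j)$, I can take $N^{\star} = 1 + \max_{1 \le i,j \le k} N_{i,j}$. Then for every $n \ge N^{\star}$ and every pair $(i,j)$, we have $A^{n}[i,j] = a^{i,j}_{n-1}$, and $n - 1 \ge N_{i,j}$, hence $A^{n}[i,j] \ge 0$. This holds uniformly across all entries, so $A^n$ is non-negative for all $n \ge N^{\star}$, meaning $A$ is eventually non-negative.

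There is no serious obstacle here: the content is entirely in Lemma~\ref{lma:halava}, and the proof is a matter of carefully tracking index shifts (between $A^{n+1}$ and $a^{i,j}_n$) and observing that the max of finitely many thresholds is itself a valid threshold. The only subtlety worth flagging explicitly is that a single global threshold $N^\star$ must be produced for the matrix, which is precisely why finiteness of the index set $\{1,\ldots,k\}^2$ is essential; the same argument would fail if one tried to apply it to an infinite family of LRS.
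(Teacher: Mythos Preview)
Your proposal is correct and follows essentially the same approach as the paper: both directions rest on the identity $a^{i,j}_n = A^{n+1}[i,j]$ from Lemma~\ref{lma:halava}, with the reverse direction taking the maximum of the finitely many per-entry thresholds to obtain a single global threshold. If anything, your version is slightly cleaner, since the paper re-derives the Cayley--Hamilton argument as an internal claim rather than simply invoking Lemma~\ref{lma:halava}, and your handling of the index shift between $A^{n+1}$ and $a^{i,j}_n$ is more explicit.
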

    
The proof follows easily from the definition of eventually non-negative matrices and from the definition of the LRS $\langle a^{ij} \rangle$ as follows : 
    
\begin{proof}
Let us begin with the following claim.
    \begin{claim}
    \label{lma : Matrix -> LRS}
        Each element $A[i,j]$ of a given matrix $M \in \mathbb{Q}^{k \times k}$ generates a Linear Recurrence Sequence(LRS) of order $k$.
    \end{claim}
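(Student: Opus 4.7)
The plan is a direct application of the Cayley--Hamilton theorem. Let $p_M(\lambda) = \det(M - \lambda I)$ denote the characteristic polynomial of $M$, which is a degree-$k$ polynomial with rational coefficients. Up to the overall sign, I may write $p_M(\lambda) = \pm(\lambda^k - c_{k-1}\lambda^{k-1} - \ldots - c_1 \lambda - c_0)$ with each $c_i \in \mathbb{Q}$. The Cayley--Hamilton theorem asserts that $M$ annihilates its own characteristic polynomial, i.e.\ $p_M(M) = 0$, which rearranges to
$$M^k \;=\; c_{k-1} M^{k-1} + c_{k-2} M^{k-2} + \ldots + c_1 M + c_0 I.$$

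Second, I would right-multiply this identity by $M^{n+1}$ for an arbitrary $n \ge 0$, obtaining
$$M^{n+k+1} \;=\; c_{k-1} M^{n+k} + c_{k-2} M^{n+k-1} + \ldots + c_0 M^{n+1}.$$
Extracting the $(i,j)$-entry of both sides and using the notation $a^{i,j}_n = M^{n+1}[i,j]$ from Lemma~\ref{lma:halava} converts this matrix identity into the scalar recurrence
$$a^{i,j}_{n+k} \;=\; c_{k-1}\, a^{i,j}_{n+k-1} + c_{k-2}\, a^{i,j}_{n+k-2} + \ldots + c_0\, a^{i,j}_n,$$
valid for every $n \ge 0$. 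After re-indexing $m = n+k$, this is precisely the defining recurrence of an LRS of order $k$ in the template of Section~\ref{sec:prelim}, with coefficients $c_0, c_1, \ldots, c_{k-1}$ and initial values $M^1[i,j], M^2[i,j], \ldots, M^k[i,j]$.

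The only subtle point --- and the closest thing to an obstacle --- is the condition from the LRS definition that the leading coefficient $c_0$ be non-zero. A quick computation with $p_M(\lambda)$ shows that $c_0$ equals $(-1)^{k+1}\det(M)$ up to the sign fixed earlier, so the order is genuinely $k$ whenever $M$ is non-singular. When $M$ is singular the same identity still holds, but the recurrence can be viewed as an LRS of strictly smaller order obtained by dropping trailing zero coefficients; the claim then holds with order \emph{at most} $k$, which is exactly how it is used in the proof of Proposition~\ref{clm : EN red UPos for k^2 LRS's}. Beyond this notational matter, no further ingredient is required: the argument is an entry-wise reading of Cayley--Hamilton, and it also immediately yields the stronger statement (already recorded in Lemma~\ref{lma:halava}) that all $k^2$ sequences $\langle a^{i,j}\rangle$ share the \emph{same} characteristic polynomial $p_M$ and thus the same defining recurrence.
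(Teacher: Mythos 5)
Your proof is correct and follows essentially the same route as the paper's: apply the Cayley--Hamilton theorem to $M$ and read off the $(i,j)$-entry of the resulting matrix recurrence to obtain the scalar recurrence with coefficients given by the characteristic polynomial and initial values $M^1[i,j],\ldots,M^k[i,j]$. Your additional remark about the leading coefficient $c_0 = \pm\det(M)$ possibly vanishing (the paper's LRS definition requires $a_0 \neq 0$) is a small point of care that the paper's own proof silently skips, but it does not change the substance of the argument.
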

    
    \begin{proof}
    Let $p(x) = x^k -b_{k-1}x^{k-1}-\ldots-b_1x-b_0$ be the characteristic polynomial of Matrix $A$. Then, according to the Cayley Hamilton Theorem, we have : $A^k = b_{k-1}A^{k-1} + \ldots +b_1A +b_0I$. Consequently multiplying both sides of the equation by $A^{n-k}$, we obtain : $ A^n = b_{k-1}A^{n-1} + \ldots + b_1A^{n-k+1} + b_0A^{n-k}, \forall n \geq k$. Denote by $\mathbf{e_i}$ a row vector consisting of all zero entries except the $i^{th}$ entry. Note that $A[i,j] = \mathbf{e_i} A \mathbf{e_j^T}$. Multiplying the above equation with the vectors $\mathbf{e_i}$ and $\mathbf{e_j^T}$ on both sides, it follows that : $ \mathbf{e_i} A^n \mathbf{e_j^T} = \mathbf{e_i} b_{k-1}A^{n-1} \mathbf{e_j^T}  + \ldots + \mathbf{e_i}b_1 A^{n-k+1} \mathbf{e_j^T} + \mathbf{e_i} b_0A^{n-k} \mathbf{e_j^T}$. Now, denote by $a^{ij}_r$, the element corresponding to $A^r[i,j]$. Observe that the sequence $\{a^{ij}_r : r > 0 \}$ is clearly a linear recurrence sequence of order $k$, with coefficients $b_0, b_1, \ldots, b_{k-1}$ and initial values $a^{ij}_1, a^{ij}_2 \ldots , a^{ij}_k$. Therefore, for a given matrix $A$ of order $k$, for all $1 \leq i,j \leq k$ we can construct a LRS(denoted by $\langle a^{ij} \rangle$) with the equation : $  a^{ij}_n = b_{k-1}a^{ij}_{n-1} + \ldots + b_1a^{ij}_{n-k+1} + b_0a^{ij}_{n-k}$.
    
    \end{proof}

    From the above claim, we know that a matrix $A \in \mathbb{Q}^{k \times k}$ generates a set of $k^2$ rational LRS's $\{ \langle a^{ij} \rangle : 1 \leq i,j \leq k \}$. Also, it follows from above that $a^{ij}_r = A^r[i,j]$. We prove the proposition as follows :
    
    Consider each LRS in the set $S = \{\langle a^{ij} \rangle : 1 \leq i,j \leq k \}$ to be ultimately non-negative. Hence for each $1 \leq i,j \leq k$, there exists a $\mu_{ij}$ such that $a^{ij}_n \geq 0$ for all $n \geq \mu_{ij}$. Denote $\nu = max_{1 \leq i,j \leq k}(\mu_{ij})$. Observe that for each $1 \leq i,j \leq k$, $a^{ij} \geq 0$ for all $n \geq \nu$. We know from the construction shown above that, $a^{ij}_n = A^n[i,j]$. Therefore, for all $1 \leq i,j \leq k$, $A^n[i,j] \geq 0$ for all $n \geq \nu$. Hence, $A$ is eventually non-negative.
    
    Now, consider $A$ to be eventually non-negative. Hence, there exists $n_0 \in \mathbb{N} - {0}$ such that $A^n \geq 0$ for all $n \geq n_0$ i.e for all $1 \leq i,j \leq k$ $A^n[i,j] \geq 0$ for all $n \geq n_0$. From the construction shown above we know $a^{ij}_n = A^n[i,j]$. Hence, for all $1 \leq i,j \leq k$, $a^{ij}_n \geq 0$ for all $n \geq n_0$. Therefore, each LRS in the set $S = \{\langle a^{ij} \rangle : 1 \leq i,j \leq k \}$ is ultimately positive, completing the proof.

\end{proof}

Next we define the notion of interleaving of LRS.
\begin{definition}
\label{def:lrs-int}
    Consider a set $S = \{ \langle u^i \rangle : 0 \leq i < t \}$ of $t$ LRSes, each having order $k$ and the same characteristic polynomial. An LRS $\langle v \rangle$ is said to be the \textbf{LRS-interleaving} of $S$ iff $v_{tn + s} = u^s_n$ for all $n \in \mathbb{N}$ and $0 \leq s < t$.
\end{definition}
    
Observe that, the order of $\langle v \rangle$ is $t k$ and its initial values are given by the interleaving of the $k$ initial values of the LRSes $\langle u^i \rangle$.  Formally, the initial values are $v_{tj+i} = u^i_j$ for $0 \leq i < t $ and $ 0 \leq j < k$.  The characteristic polynomial $p_{\langle v \rangle}(s)$ is equal to $p_{\langle u^i \rangle}(x^t)$.

\begin{restatable}{proposition}{interleaf}
\label{lma : interleaf}
    The LRS-interleaving $\langle v \rangle$ of a set of LRSes $S = \{ \langle u^{i} \rangle : 0 \leq i < t \}$ is ultimately non-negative iff each LRS $\langle u^i \rangle$ in $S$ is ultimately non-negative.
\end{restatable}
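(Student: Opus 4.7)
The plan is to unpack the definition of LRS-interleaving and prove the two directions of the biconditional directly, using only the indexing relation $v_{tn+s} = u^s_n$. No recurrence-theoretic machinery is really needed beyond the definition, since ultimate non-negativity is a purely index-based tail property.

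For the forward direction, I would assume each $\langle u^i \rangle$ is ultimately non-negative and pick for every $0 \le i < t$ an integer $N_i$ such that $u^i_n \ge 0$ for all $n \ge N_i$. I then set $N = \max_{0 \le i < t} N_i$ and claim that $v_m \ge 0$ for every $m \ge tN$. Given such an $m$, I write $m = tn + s$ with $0 \le s < t$ uniquely; since $m \ge tN$ and $s < t$, we have $n \ge N \ge N_s$, so $v_m = u^s_n \ge 0$.

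For the converse, I assume $\langle v \rangle$ is ultimately non-negative and fix an integer $M$ with $v_m \ge 0$ for all $m \ge M$. For each residue $s \in \{0, 1, \ldots, t-1\}$, I choose $N_s = \lceil (M - s)/t \rceil$ (or simply $M$ to avoid fractional arithmetic), so that for every $n \ge N_s$ the index $tn + s \ge M$. Then $u^s_n = v_{tn+s} \ge 0$ for all $n \ge N_s$, showing that every $\langle u^s \rangle$ in $S$ is ultimately non-negative.

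Honestly, I do not expect any real obstacle here: the statement is essentially the observation that partitioning $\mathbb{N}$ into $t$ arithmetic progressions of common difference $t$ commutes with taking tails. The only care needed is to convert thresholds between the $v$-indexing (which runs over all of $\mathbb{N}$) and the $u^s$-indexing (which sees only the $s$-th progression), and the $\max$ in the forward direction and ceiling in the backward direction handle this cleanly.
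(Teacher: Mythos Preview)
Your proposal is correct and follows essentially the same approach as the paper: both proofs unpack the indexing relation $v_{tn+s} = u^s_n$ and convert tail thresholds between the interleaved sequence and its constituent progressions via a maximum (in one direction) and a residue-by-residue threshold (in the other). The only cosmetic difference is that you label the directions in the opposite order from the biconditional as stated, but both implications are handled correctly.
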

    
\begin{proof}
    Consider $\langle v \rangle$ is ultimately non-negative. Thus, there exists $r_0 \in \mathbb{N}$ such that for all $r \geq r_0$ $v_r \geq 0$. For all $r \in \mathbb{N}$ We can write $r = t \cdot n + s$ where $n \in \mathbb{N}$ and $0 \leq s < t$ by Euclid's division lemma. Similarly we write $r_0 = t \cdot n_0 + s_0$ where $0 \leq s_0 < t$ and $n_0 \in \mathbb{N}$. Observe that $v_{r_0} = v_{t \cdot n_0 + s_0} = u^{s_0}_{n_0}$. Since, for all $r \geq r_0$ $v_r \geq 0$, from the definition of $\langle v \rangle$, we observe that $u^i_n \geq 0$ for all $n \geq n_0$ where $s_0 \leq i < t$ and $u^j_n \geq 0$ for all $n \geq n_0 + 1$ where $0 \leq j < s_0$. Hence, each LRS in the set $S = \{ \langle u^i \rangle : 0 \leq i < t \}$ is ultimately non-negative.
    
    Now consider the set $S = \{ \langle u^i \rangle : 0 \leq i < t \}$ of LRS's to be ultimately non negative. Thus, for all $0 \leq i < t$, there exists an $n_i > 0$ such that $u^i_n \geq 0$ for all $n \geq n_i$. Following the definition of $\langle v \rangle$, we have : $\bigwedge_{i=0}^{t-1} (v_n \geq 0$ $\forall n \geq t \cdot n_i + s_i, 0 \leq s_i < t)$. Further, we can observe that in fact $v_n \geq 0$ for all $n > N$ where $N = max_{0 \leq i < t} (t \cdot n_i + s_i)$. Hence, $\langle v \rangle$ is ultimately non-negative.
\end{proof}
    
Now, from the definitions of LRSes $\langle a^{i,j} \rangle$, $\langle u^i \rangle$ and $\langle v \rangle$, and from Propositions~\ref{clm : EN red UPos for k^2 LRS's} and \ref{lma : interleaf}, we obtain the following crucial lemma.
    
\begin{lemma}
\label{cor : ENId for M -> LRS of order k^3}
    Given a matrix $A \in \mathbb{Q}^{k \times k}$, let $S = \{\langle u^{i} \rangle \mid u_n^{i} = a_n^{pq},$ where $p = \lfloor i/k \rfloor +1, ~q = i\mod k + 1, ~0 \le i < k^2\}$ be the set of $k^2$ LRSes mentioned in Lemma~\ref{lma:halava}. The LRS $\langle v \rangle$ generated by LRS-interleaving of $S$ satisfies the following: 
    \begin{enumerate}
        \item $A$ is eventually non-negative iff $\langle v \rangle$ is ultimately non-negative.
        
        \item $p_{\langle v \rangle}(x) = \prod_{i=1}^k (x^{k^2} - \lambda_i)$, where $\lambda_1, \ldots \lambda_k$ are the (possibly repeated) eigenvalues of $A$.
    
        \item $v_{rk^2 + sk + t} = u_r^{sk+t} = a_r^{s+1,t+1} = A^{r+1}[s+1,t+1]$ for all $r \in \mathbb{N}$,  $0 \leq s,t < k$.
    \end{enumerate}
\end{lemma}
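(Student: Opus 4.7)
The plan is to derive all three claims by directly composing the earlier propositions with the definition of LRS-interleaving; the main work will be checking that the indexing conventions align.

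First I would handle claim~(3), the index formula, by pure unwinding. Applying Definition~\ref{def:lrs-int} with $t = k^2$ to the set $S$ gives $v_{k^2 r + i} = u^i_r$ for $0 \le i < k^2$. Writing $i = sk + t$ with $0 \le s, t < k$ gives $\lfloor i/k \rfloor = s$ and $i \bmod k = t$, so the indexing of $S$ immediately yields $u^{sk+t}_r = a^{s+1, t+1}_r$. Lemma~\ref{lma:halava} then identifies $a^{s+1,t+1}_r$ with $A^{r+1}[s+1, t+1]$, giving the full chain of equalities asserted in the lemma.

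Next, claim~(1) will follow by chaining the two available equivalences. The set $S$ is merely a re-indexing of $\{\langle a^{i,j}\rangle : 1 \le i, j \le k\}$, so Proposition~\ref{clm : EN red UPos for k^2 LRS's} tells us that $A$ is eventually non-negative iff every LRS in $S$ is ultimately non-negative, and Proposition~\ref{lma : interleaf} tells us this is in turn equivalent to $\langle v \rangle$ being ultimately non-negative.

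Finally, claim~(2) will use the characteristic polynomial identity for interleavings noted right after Definition~\ref{def:lrs-int}: interleaving $t$ LRSes sharing characteristic polynomial $p(x)$ produces one with characteristic polynomial $p(x^t)$. By Lemma~\ref{lma:halava}, each $\langle a^{i,j}\rangle$, and hence each $\langle u^i\rangle$ in $S$, has characteristic polynomial $p_A(x) = \prod_{i=1}^k (x - \lambda_i)$, where the $\lambda_i$ are the eigenvalues of $A$ listed with multiplicity. Substituting $x^{k^2}$ for $x$ gives $p_{\langle v \rangle}(x) = \prod_{i=1}^k (x^{k^2} - \lambda_i)$. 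The only even mildly delicate step is keeping the interleaving index bookkeeping straight; everything else is a direct quotation of previously established results.
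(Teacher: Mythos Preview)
Your proposal is correct and follows exactly the route the paper indicates: the paper simply states that the lemma follows from the definitions of $\langle a^{i,j}\rangle$, $\langle u^i\rangle$, $\langle v\rangle$ together with Propositions~\ref{clm : EN red UPos for k^2 LRS's} and~\ref{lma : interleaf}, and you have spelled out precisely that unwinding.
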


We lift this argument from a single matrix to a weighted sum of matrices. 

\begin{restatable}{lemma}{lmaENSumtoUPos}
    \label{lma : ENSum_g2_to_UPos}
    Given $\AAA = \{(w_1, A_1), \ldots, (w_m, A_m)\}$, there exists a linear recurrence $\langle a^\star \rangle$, such that $\sum_{i=1}^m w_i A_i^n$ is eventually non-negative iff $\langle a^\star \rangle$ is ultimately non-negative.
\end{restatable}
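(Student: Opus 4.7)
The plan is to lift the single-matrix reduction of Lemma~\ref{cor : ENId for M -> LRS of order k^3} to the weighted-sum setting, by first producing one LRS per entry of the aggregated sum and then stitching these $k^2$ LRSes together by interleaving, exactly in the spirit of Lemma~\ref{cor : ENId for M -> LRS of order k^3}.

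First, I would invoke Lemma~\ref{lma:halava} separately for each $A_\ell$: for every $\ell \in \{1,\ldots,m\}$ and every pair $(i,j)$ with $1 \le i,j \le k$, this yields an LRS $\langle a^{(\ell),i,j}\rangle$ of order $k$ satisfying $a^{(\ell),i,j}_n = A_\ell^{n+1}[i,j]$ and having characteristic polynomial $p_{A_\ell}(x)$. Next, for each pair $(i,j)$, I would form the weighted pointwise sum
\[
b^{i,j}_n \;:=\; \sum_{\ell=1}^m w_\ell\cdot a^{(\ell),i,j}_n \;=\; \Bigl(\sum_{\ell=1}^m w_\ell A_\ell^{n+1}\Bigr)[i,j].
\]
By the closure of LRSes under scalar multiplication and addition (Theorem~4.1 of~\cite{d2ad832a2cc24904b7936daaa5c1c17d}), $\langle b^{i,j}\rangle$ is again an LRS, and the standard closure construction shows it can be realized with the common characteristic polynomial $q(x) := \prod_{\ell=1}^m p_{A_\ell}(x)$ of degree $mk$ --- the same polynomial for \emph{every} pair $(i,j)$, with only the first $mk$ initial values depending on $(i,j)$.

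Mirroring the proof of Proposition~\ref{clm : EN red UPos for k^2 LRS's}, I would then argue that $\msum{\AAA}{n}$ is eventually non-negative iff each of the $k^2$ LRSes $\langle b^{i,j}\rangle$ is ultimately non-negative; this is immediate from the identity $b^{i,j}_{n-1} = \msum{\AAA}{n}[i,j]$ and the entrywise nature of non-negativity. Finally, since all $\langle b^{i,j}\rangle$ share the polynomial $q$, I would apply Definition~\ref{def:lrs-int} to define $\langle a^\star\rangle$ as the LRS-interleaving of the family $\{\langle b^{i,j}\rangle : 1\le i,j\le k\}$; by Proposition~\ref{lma : interleaf}, $\langle a^\star\rangle$ is ultimately non-negative iff each $\langle b^{i,j}\rangle$ is, closing the chain of equivalences.

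The one delicate step is ensuring a \emph{common} characteristic polynomial across all the $\langle b^{i,j}\rangle$ so that LRS-interleaving is well-defined; I would handle this by committing uniformly to $q = \prod_\ell p_{A_\ell}$ rather than using the (possibly $(i,j)$-dependent) minimal polynomial of each sum, since Definition~\ref{def:lrs-int} only requires \emph{some} common defining recurrence. The resulting $\langle a^\star\rangle$ has order $k^2 \cdot mk = mk^3$, and the entire construction --- computing each $p_{A_\ell}$, multiplying to obtain $q$, and reading off the initial values of $\langle a^\star\rangle$ from the entries $A_\ell^{t}[i,j]$ for $1\le t\le mk$ --- is polynomial in the bit size of $\AAA$, giving the promised polynomial-time reduction and, together with Theorem~\ref{thm : ENSum to UPos}'s statement, completing the upper-bound half.
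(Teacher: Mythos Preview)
Your proof is correct, and in fact your construction yields exactly the same sequence $\langle a^\star\rangle$ as the paper's, but you arrive at it by swapping the order of the two closure operations. The paper first applies Lemma~\ref{cor : ENId for M -> LRS of order k^3} to each $A_i$ separately, obtaining an interleaved LRS $\langle v^i\rangle$ of order $k^3$ with $v^i_{rk^2+sk+t} = A_i^{r+1}[s+1,t+1]$, and then defines $a^\star_n := \sum_{i=1}^m w_i v^i_n$ using closure under scaling and addition. You instead sum first (entrywise) to get $\langle b^{i,j}\rangle$ and interleave afterwards. The paper's order is slightly cleaner because the interleaving step is applied where a common characteristic polynomial is automatic (all $k^2$ entry-LRSes of a single $A_i$ share $p_{A_i}$ by Lemma~\ref{lma:halava}), so the ``delicate step'' you flag never arises; the final summation of the $\langle v^i\rangle$ needs no such hypothesis. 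Your route, by contrast, makes the order of the resulting LRS ($mk^3$) and the role of $\prod_\ell p_{A_\ell}$ more explicit, which is arguably more informative for the polynomial-time claim. Either way the argument goes through.
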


\begin{proof}
    For each matrix $A_i$ in $\AAA$, let $\langle v^i \rangle$ be the interleaved LRS as constructed in Lemma~\ref{cor : ENId for M -> LRS of order k^3}.  Let $w_i \langle v^i \rangle$ denote the scaled LRS whose $n^{th}$ entry is $w_i v^i_n$ for all $n \ge 0$.  The LRS $\langle a^\star \rangle$ is obtained by adding the scaled LRSes $w_1\langle v^1\rangle, w_2\langle v^2 \rangle, \ldots$ $w_m\langle v^m \rangle$.  Clearly, $a^\star_n$ is non-negative iff $\sum_{i=1}^m w_i v^i_n$ is non-negative.  From the definition of $v^i$ (see Lemma~\ref{cor : ENId for M -> LRS of order k^3}), we also know that for all $n \ge 0$, $v^i_n = A_i^{r+1}[s+1,t+1]$, where $r = \lfloor n/k^2 \rfloor$, $s = \lfloor (n \mod k^2)/k \rfloor$ and $t = n \mod k$.  Therefore,  $a^\star_n$ is non-negative iff $\sum_{i=1}^m w_i A_i^{r+1}[s+1,t+1]$ is non-negative.  It follows that $\langle a^\star \rangle$ is ultimately non-negative iff $\sum_{i=1}^m w_i A_i^n$ is eventually non-negative. 
\end{proof}

From Lemma \ref{lma : ENSum_g2_to_UPos}, we can conclude the main result of this section, i.e., proof of Theorem \ref{thm : ENSum to UPos}. The following corollary can be shown \emph{mutatis mutandis}.

\begin{corollary}
\label{cor : EPSum -> UPos}
    {\EPSum} reduces to {\SUPos} in polynomial time.
\end{corollary}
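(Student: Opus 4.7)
The plan is to mirror the entire argument used for Theorem~\ref{thm : ENSum to UPos} (via Lemma~\ref{lma : ENSum_g2_to_UPos}), but with strict inequalities throughout. Since none of the intermediate constructions (companion/generator matrices, the sequence $\langle a^{i,j}\rangle$ generated by $A[i,j]$, the LRS-interleaving, and the weighted pointwise sum of LRSes) depend on whether the inequalities are strict or non-strict, the reduction machinery itself is reusable verbatim; what requires checking is that each biconditional ``eventually/ultimately non-negative'' can be strengthened to ``eventually/ultimately positive.''

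First, I would restate and re-prove Proposition~\ref{clm : EN red UPos for k^2 LRS's} with ``positive'' replacing ``non-negative'': a matrix $A \in \mathbb{Q}^{k\times k}$ is eventually positive iff every LRS $\langle a^{i,j}\rangle$ generated by $A[i,j]$ is ultimately positive. The proof is immediate from $a^{i,j}_n = A^{n+1}[i,j]$: if each $\langle a^{i,j}\rangle$ has a threshold $\mu_{i,j}$ beyond which its terms are strictly positive, then $\nu = \max_{i,j}\mu_{i,j}$ witnesses eventual positivity of $A$; conversely, an $N$ that witnesses eventual positivity of $A$ works simultaneously for all $k^2$ sequences. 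Next, I would restate Proposition~\ref{lma : interleaf} for positivity: the LRS-interleaving $\langle v\rangle$ of $S = \{\langle u^i\rangle : 0\le i < t\}$ is ultimately positive iff each $\langle u^i\rangle$ is ultimately positive. This follows by the same Euclidean-division argument, noting that the maximum of finitely many thresholds still gives a single threshold beyond which every $v_r$ corresponds to some $u^i_n$ with $n$ past the individual threshold $n_i$, hence strictly positive.

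With these two strict-inequality analogues in hand, I would lift the argument from a single matrix to a weighted sum exactly as in Lemma~\ref{lma : ENSum_g2_to_UPos}. Concretely: for each $(w_i,A_i) \in \AAA$, construct the interleaved LRS $\langle v^i\rangle$ from Lemma~\ref{cor : ENId for M -> LRS of order k^3}, scale by $w_i$, and define $\langle a^\star\rangle$ as the pointwise sum $\sum_{i=1}^m w_i \langle v^i\rangle$ (which is itself an LRS by closure of LRS under scaling and addition). The key identity $a^\star_n = \sum_{i=1}^m w_i v^i_n = \sum_{i=1}^m w_i A_i^{r+1}[s+1,t+1]$, where $r,s,t$ are obtained from $n$ by the mixed-radix decomposition on $(k^2,k,1)$, implies that $a^\star_n > 0$ for all large $n$ iff every entry of $\sum_{i=1}^m w_i A_i^n$ is strictly positive for all large $n$. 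Hence $\AAA$ is eventually positive iff $\langle a^\star\rangle$ is ultimately positive, giving a polynomial-time reduction from \EPSum{} to \SUPos{}.

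I do not foresee a genuine obstacle: the only point one must watch is that the threshold extraction in both the matrix-to-LRS and interleaving steps still goes through with strict inequalities, which it trivially does because we take a maximum of only finitely many thresholds ($k^2$ of them for a single matrix, $m$ more for the weighted sum). The construction of $\langle a^\star\rangle$ is polynomial in the input size (the order is $m\cdot k^3$ and the coefficients come from $\prod_{i,j}(x^{k^2} - \lambda_j^{(i)})$-type expansions, which are computable from the characteristic polynomials of the $A_i$'s), so the reduction has polynomial size and the claim follows.
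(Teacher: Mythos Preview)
Your proposal is correct and matches the paper's approach: the paper simply states that the corollary follows \emph{mutatis mutandis} from the proof of Theorem~\ref{thm : ENSum to UPos}, which is precisely the strict-inequality replay you describe. Your explicit verification that Propositions~\ref{clm : EN red UPos for k^2 LRS's} and~\ref{lma : interleaf} go through with ``positive'' in place of ``non-negative'' fills in exactly the details the paper leaves implicit.
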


\paragraph{Remark : } It is also possible to argue about a the eventual non-negativity (positivity) of only certain indices of the matrix using a similar argument as above. By interleaving only the LRS's corresponding to certain indices of the matrices in $\AAA$, we can show that this problem is equivalent to \UPos{} (\SUPos{}).

\section{Decision Procedures for Special Cases}
\label{sec:pos}

Since there are no known algorithms for solving {\UPos} in general, the results of the previous section present a bleak picture for deciding {\ENSum} and {\EPSum}. We now show that these problems can be solved in some important special cases. 

\subsection{Simple matrices and matrices with real algebraic  eigenvalues}
Our first positive result follows from known results for special classes of LRSes.  

\begin{restatable}{theorem}{consqENSumtoUPos}
\label{thm:consqENSumtoUPos}
    \ENSum{} and \EPSum{} are decidable for $\AAA = \{(w_1, A_1), \ldots (w_m, A_m)\}$ if one of the following conditions holds for all $i \in \{1, \ldots m\}$.
    \begin{enumerate}
    \item All $A_i$ are simple. In this case, \ENSum{} and \EPSum{} are in $\mathsf{PSPACE}$. Additionally, if the rank $k$ of all $A_i$ is fixed, \ENSum{} and \EPSum{} are in $\mathsf{PTIME}$.
    \item All eigenvalues of $A_i$ are roots of real algebraic numbers. In this case, \ENSum{} and \EPSum{} are in $\mathsf{coNP^{PosSLP}}$ (a complexity class in the Counting Hierarchy, contained in $\mathsf{PSPACE}$).
    \end{enumerate}
\end{restatable}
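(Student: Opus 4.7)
The plan is to invoke the reduction from Section~\ref{sec : ENId reduces to Ultimate Positivity} (Theorem~\ref{thm : ENSum to UPos} and Lemma~\ref{lma : ENSum_g2_to_UPos}), which produces, from $\AAA = \{(w_1,A_1),\ldots,(w_m,A_m)\}$, an LRS $\langle a^\star \rangle = \sum_{i=1}^m w_i \langle v^i \rangle$ whose ultimate non-negativity (resp. positivity) is equivalent to $\sum_{i=1}^m w_i A_i^n$ being eventually non-negative (resp. positive). The remainder of the proof then consists of showing that the structural hypotheses on the $A_i$ are inherited by the spectrum of $\langle a^\star \rangle$, after which the desired complexity bounds follow directly from known results for \UPos{} and \SUPos{} on the corresponding subclasses of LRSes.

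For part (1), I would first argue that each $\langle v^i \rangle$ is a simple LRS. By Lemma~\ref{cor : ENId for M -> LRS of order k^3}, its characteristic polynomial is $\prod_{j=1}^{k}(x^{k^2} - \lambda^{(i)}_j)$, where the $\lambda^{(i)}_j$ are the eigenvalues of $A_i$. Since $A_i$ is simple, the $\lambda^{(i)}_j$ are pairwise distinct, and the root sets of $x^{k^2} - \lambda^{(i)}_{j_1}$ and $x^{k^2} - \lambda^{(i)}_{j_2}$ are disjoint whenever $\lambda^{(i)}_{j_1} \ne \lambda^{(i)}_{j_2}$ (two complex numbers having the same $k^2$-th power must be equal). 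Hence each $\langle v^i \rangle$ has $k^3$ distinct characteristic roots. Next, I would argue that the sum of simple LRSes is simple: the minimal characteristic polynomial of $w_1\langle v^1\rangle + \ldots + w_m\langle v^m\rangle$ divides the least common multiple of the individual minimal polynomials, and an LCM of square-free polynomials is square-free. Thus $\langle a^\star \rangle$ is a simple LRS of order at most $mk^3$, constructible in polynomial time. Invoking the {\myPSPACE} algorithm for {\UPos}/{\SUPos} on simple LRSes (see the cited results of Ouaknine--Worrell), we obtain the {\myPSPACE} bound for \ENSum{} and \EPSum{}. When both $k$ and $m$ are fixed, the order of $\langle a^\star \rangle$ is bounded by a constant, so the stronger polynomial-time algorithm for simple LRSes of fixed order applies and yields the {\myPTIME} bound.

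For part (2), I would observe that if $\lambda^{(i)}_j$ satisfies $(\lambda^{(i)}_j)^{d_{i,j}} \in \myRA$ for some positive integer $d_{i,j}$, then every $k^2$-th root $\mu$ of $\lambda^{(i)}_j$ satisfies $\mu^{k^2 d_{i,j}} \in \myRA$, so it too is a root of a real algebraic number. Consequently every characteristic root of $\langle v^i \rangle$, and therefore every characteristic root of $\langle a^\star \rangle$, is a root of a real algebraic number. Known results for \UPos{} and \SUPos{} on LRSes whose characteristic roots are all roots of real algebraics (see~\cite{s_et_al:LIPIcs:2017:8130}) place these problems in $\mathsf{coNP^{PosSLP}}$, from which the claimed bound follows.

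The main obstacle I expect is the verification that the sum step preserves the key spectral properties. For simplicity, the delicate point is that the minimal characteristic polynomial of $\langle a^\star \rangle$ could in principle be smaller than the LCM of the component polynomials, so one must check that this can only \emph{remove} roots, not introduce repeated ones; this follows from the exponential-polynomial representation of a sum of LRSes, where distinct characteristic roots remain distinct and only their coefficient polynomials are combined. For the roots-of-real-algebraics case, the closure under taking $k^2$-th roots is immediate, but one must be careful that the effective representation of these algebraic numbers (needed for invoking the $\mathsf{coNP^{PosSLP}}$ bound) can be computed in polynomial time from the input; this is straightforward because adjoining the $k^2$-th root of a real algebraic number only blows up the degree of the minimal polynomial polynomially.
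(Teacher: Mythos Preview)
Your proposal is correct and follows essentially the same route as the paper: construct the interleaved LRS $\langle v^i\rangle$ for each $A_i$ via Lemma~\ref{cor : ENId for M -> LRS of order k^3}, sum the scaled sequences to obtain $\langle a^\star\rangle$, argue that the characteristic polynomial of the sum divides the LCM of the component polynomials, and then invoke the known \UPos{}/\SUPos{} results for simple LRS and for LRS whose roots are roots of real algebraic numbers. Your treatment of the spectral-closure steps (disjointness of the root sets of $x^{k^2}-\lambda_{j_1}$ and $x^{k^2}-\lambda_{j_2}$, square-freeness of an LCM of square-free polynomials, closure of ``root of a real algebraic'' under taking $k^2$-th roots) matches the paper's, and your remark that the minimal polynomial of the sum can only \emph{lose} roots is in fact a bit more careful than the paper's phrasing. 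One small point: you require both $k$ and $m$ fixed for the {\myPTIME} claim, whereas the theorem statement mentions only $k$; your caution is warranted, since the order of $\langle a^\star\rangle$ is bounded by $mk^3$ and the paper's own argument implicitly needs this to be bounded.
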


\begin{proof}
     Suppose each $A_i \in \mathbb{Q}^{k\times k}$, and let $\lambda_{i,1}, \ldots \lambda_{i,k}$ be the (possibly repeated) eigenvalues of $A_i$.
     The characteristic polynomial of $A_i$ is $p_{A_i}(x) = \prod_{j=1}^k (x-\lambda_{i,j})$.  Denote the LRS obtained from $A_i$ by LRS interleaving as in Lemma~\ref{cor : ENId for M -> LRS of order k^3} as $\langle a^{i} \rangle$.  By Lemma \ref{cor : ENId for M -> LRS of order k^3}, we have (i) $a^{i}_{rk^2 + sk + t} = A_i^{r+1}[s+1,t+1]$ for all $r \in \mathbb{N}$ and $0 \le s, t < k$, and (ii) $p_{\langle a^{i} \rangle}(x) = \prod_{j=1}^k \big(x^{k^2}-\lambda_{i,j}\big)$.  We now define the scaled LRS $\{\langle b^{i} \rangle$, where $\mid$ $b^{i}_n = w_i~a^{i}_n$ for all $n \in \mathbb{N}$\}.  Since scaling does not change the characteristic polynomial of an LRS (see Proposition~\ref{lma:same_char} below for a simple proof), we have $p_{\langle b^{i} \rangle}(x) = \prod_{j=1}^k \big(x^{k^2}-\lambda_{i,j}\big)$.
     
     \begin{restatable}{proposition}{lmaSameChar}
     \label{lma:same_char}
          For every LRS $\langle u \rangle$, with characteristic polynomial $p(x)$. The characteristic polynomial of an arbitrary LRS $\langle v \rangle$ defined as $v_i = wu_i$ for any arbitrary real constant $w$ is also $p(x)$.  
     \end{restatable}

    \begin{proof}
        Let the recurrence relation of $\langle u \rangle$ is given by $u_n = \sum_{i=1}^{k} a_{k-i}u_{n-i}$. We define $\langle v \rangle$ such that $v_n = wu_n$. Therefore, each term of $\langle v \rangle$ can also be written as $v_n = w\sum_{i=1}^{k} a_{k-i}u_{n-i}$ $= \sum_{i=1}^{k} a_{k-i}w u_{n-i}$ $= \sum_{i=1}^{k} a_{k-i}v_{n-i}$ (since $v_i = wu_i$ for all $i>0$). Observe that the recurrence relations of $\langle u \rangle$ and $\langle v \rangle$ are exactly the same, and hence they must have the same characteristic polynomial. 
    \end{proof}
     
     Once the LRSes $\langle b^1 \rangle, \ldots \langle b^m \rangle$ are obtained as above, we sum them to obtain the LRS $\langle b^\star \rangle$. Thus, for all $n \in \mathbb{N}$, we have $b^\star_n = \sum_{i=1}^m b^{i}_n = \sum_{i=1}^m w_i~a^{i}_n$ $= \sum_{i=1}^m w_i~A_i^r[s,t]$, where $n = rk^2 + sk + t$, $r \in \mathbb{N}$ and  $0 \le s, t < k$.  Hence, \ENSum{} (resp. \EPSum{}) for $\{(w_1, A_1), \ldots (w_m, A_m)\}$ polynomially reduces to \UPos{} (resp. \SUPos{}) for $\langle b^\star \rangle$.
     
     By~\cite{d2ad832a2cc24904b7936daaa5c1c17d}, we know that the characteristic polynomial $p_{\langle b^\star \rangle}(x)$  is the LCM of the characteristic polynomials  $p_{\langle b^i \rangle}(x)$ for $1 \le i \le m$.  If $A_i$ are simple, there are no repeated roots of $p_{\langle b^i \rangle}(x)$.  If this holds for all $i \in \{1, \ldots m\}$, there are no repeated roots of the LCM of $p_{\langle b^1 \rangle}(x), \ldots p_{\langle b^m \rangle}(x)$ as well.  Hence, $p_{\langle b^\star \rangle}(x)$ has no repeated roots.  Similarly, if all eigenvalues of $A_i$ are roots of real algebraic numbers, so are all roots of $p_{\langle b^i \rangle}(x)$.  It follows that all roots of the LCM of $p_{\langle b^1 \rangle}(x), \ldots p_{\langle b^m \rangle}(x)$, i.e. $p_{\langle b^\star \rangle}(x)$, are also roots of real algebraic numbers.
     
     The theorem now follows from the following two known results about LRS.
     \begin{enumerate}
         \item \UPos{} (resp. \SUPos{}) for simple LRS is in $\mathsf{PSPACE}$. Furthermore, if the LRS is of bounded order, \UPos{} (resp. \SUPos{}) is in $\mathsf{PTIME}$~\cite{ouaknine2017ultimate}.
         \item \UPos{} (resp. \SUPos{}) for LRS in which all roots of characteristic polynomial are roots of real algebraic numbers is in $\mathsf{coNP^{PosSLP}}$~\cite{DBLP:conf/stacs/AkshayBMV020}. \qed
    \end{enumerate}
\end{proof}

\subsection{Diagonalizable matrices} 
We now show that \ENSum{} (resp. \EPSum{}) is decidable if each matrix $A_i$ is diagonalizable.  Since diagonalizable matrices strictly subsume simple matrices, this generalizes Theorem~\ref{thm:consqENSumtoUPos}(1). Interestingly, the proof uses a reduction technique that generalizes to properties beyond $\ENSum{}$ and $\EPSum{}$. 

To see an example of a matrix that has 
\begin{wrapfigure}[7]{l}{0.3\textwidth} 
  \begin{minipage}{0.3\textwidth}
    \vspace*{-1cm}
  \begin{equation*}
    A = \begin{bmatrix}
        5 & 12 & -6 \\
        -3 & -10 & 6 \\
        -3 & -12 & 8
    \end{bmatrix}
  \end{equation*}
\end{minipage}
\caption{Diagonalizable matrix}
\label{fig:matrix2}
\end{wrapfigure}
repeated eigenvalues and is diagonalizable, consider the matrix $A$ shown in Fig.~\ref{fig:matrix2}. This has eigenvalues $2, 2$ and $-1$, and is diagonalizable. Specifically, $(-4, 1, 0)$ and $(2, 0, 1)$ are two linearly independent eigenvectors corresponding to eigenvalue $-2$, and $(-1, 1, 1)$ is an eigenvector corresponding to eigenvalue $-1$.  

Let $A$ be a $k \times k$ real diagonalizable matrix.  By definition, there exists an invertible $k \times k$ matrix $S$ and a diagonal $k \times k$ matrix $D$ such that $A = S D S^{-1}$.  Note that although $A \in \mathbb{R}^{k\times k}$,  matrices $S$ and $D$ may have complex entries. 
It is easy to see that for every $i \in \{1, \ldots k\}$, $D[i,i]$ is an eigenvalue of $A$. Moreover, if $\alpha$ is an eigenvalue of $A$ with algebraic multiplicity $\rho$, then  $\alpha$ appears exactly $\rho$ times along the diagonal of $D$. It can also be easily shown that for every $i \in \{1, \ldots k\}$, the $i^{th}$ column of $S$ (resp. $i^{th}$ row of $S^{-1}$) is an eigenvector of $A$ (resp. of $\transp{A}$) corresponding to the eigenvalue $D[i,i]$. Moreover, the columns of $S$ (resp. rows of  $S^{-1}$) form a basis of the vector space $\mathbb{C}^k$.

Let $\alpha_1, \ldots \alpha_m$ be the  eigenvalues of $A$ with algebraic multiplicities $\rho_1, \ldots \rho_m$ respectively.  Without loss of generality, we assume that $\rho_1 \ge \ldots \ge \rho_m$ and that the diagonal of $D$ is partitioned into segments as follows: the first $\rho_1$ entries along the diagonal are $\alpha_1$, the next $\rho_2$ entries are $\alpha_2$, and so on. We refer to these segments as the $\alpha_1$-segment, $\alpha_2$-segment and so on, of the diagonal of $D$. Formally, if $\kappa_i$ denotes $\sum_{j=1}^{i-1} \rho_j$, the $\alpha_i$-segment of the diagonal of $D$ consists of the entries $D[\kappa_i+1, \kappa_i+1], \ldots D[\kappa_i+\rho_i, \kappa_i + \rho_i]$,
all of which are $\alpha_i$.

Since $A$ is a real matrix, its eigenvalues and eigenvectors enjoy special properties. Specifically, the characteristic polynomial of $A$ has all real coefficients.  As a consequence, for every eigenvalue $\alpha$ of $A$ (and hence of $\transp{A}$), its complex conjugate, denoted $\overline{\alpha}$, is also an eigenvalue of $A$ (and hence of $\transp{A}$).  Additionally, the algebraic multiplicities of $\alpha$ and $\overline{\alpha}$ are the same.  This allows us to define a bijection $\bij{D}$ from $\{1, \ldots, k\}$ to $\{1, \ldots k\}$ as follows. If $D[i,i]$ is real, then $\bij{D}(i) = i$.  Otherwise, let $D[i,i] = \alpha \in \mathbb{C}$ and let $D[i,i]$ be the $l^{th}$ element in the $\alpha$-segment of the diagonal of $D$. Then $\bij{D}(i) = j$, where $D[j,j]$ is the $l^{th}$ element in the $\overline{\alpha}$-segment of the diagonal of $D$.  The matrix $A$ being real also implies that for every real eigenvalue $\alpha$ of $A$ (resp. of \transp{A}), there exists a basis of \emph{real eigenvectors} of the corresponding eigenspace. Additionally, for every non-real eigenvalue $\alpha$ and for every set of eigenvectors of $A$ (resp. of  $\transp{A}$) that forms a basis of the eigenspace corresponding to $\alpha$, the component-wise complex conjugates of these basis vectors serve as eigenvectors of $A$ (resp. of $\transp{A}$) and form a basis of the eigenspace corresponding to $\overline{\alpha}$.  

We use the above properties to choose the matrix $S^{-1}$ (and hence $S$) such that $A = S D S^{-1}$.  Specifically, suppose $\alpha$ is an eigenvalue of $A$ (and hence of $\transp{A}$) with algebraic multiplicity $\rho$. Let $\{i+1, \ldots i +\rho\}$ be the set of indices $j$ for which $D[j, j] = \alpha$.  If $\alpha$ is real, the $i+1^{st}, \ldots i+{\rho}^{th}$ rows of $S^{-1}$ are chosen to be real eigenvectors of $\transp{A}$ that form a basis of the eigenspace corresponding to $\alpha$. If $\alpha$ is not real, the $i+1^{th}, \ldots i+{\rho}^{th}$ rows of $S^{-1}$ are chosen to be  (possibly complex) eigenvectors of $\transp{A}$ that form a basis of the eigenspace  corresponding to $\alpha$.  Moreover, using notation introduced above,  the $\bij{D}(i+s)^{th}$ row of $S^{-1}$ is chosen to be the component-wise complex conjugate of the $i+s^{th}$ row of $S^{-1}$, for all $s \in \{1, \ldots \rho\}$.  

\begin{definition}\label{def:simple_perturb}
Let $A = S D S^{-1}$ be a $k\times k$ real diagonalizable matrix. We say that $\EE = (\varepsilon_1, \ldots \varepsilon_k) \in \mathbb{R}^k$  is a \emph{perturbation} w.r.t. $D$ if $\varepsilon_i \neq 0$  and $\varepsilon_i = \varepsilon_{\bij{D}(i)}$ for all $i \in \{1, \ldots k\}$. Furthermore,  the \emph{$\EE$-perturbed variant of $A$} is the matrix $A' = S D' S^{-1}$, where $D'$ is the $k\times k$ diagonal matrix with $D'[i,i] = \varepsilon_i D[i,i]$ for all $i \in \{1, \ldots k\}$.
\end{definition}
In the following, we omit "w.r.t. $D$" and simply say "$\EE$ is a perturbation", when $D$ is clear from the context. Clearly, $A'$ as defined above is a diagonalizable matrix and its eigenvalues are given by the diagonal elements of $D'$. 

\begin{restatable}{lemma}{simplePerturb}
\label{lem:simple-perturb}
    For every real diagonalizable matrix $A = S D S^{-1}$ and  perturbation $\EE ~(\in \mathbb{R}^k)$,  the $\EE$-perturbed variant of $A$ is a real diagonalizable  matrix. Furthermore, if $A$ is a matrix of real algebraic entries and $\EE \in \mathbb{Q}^k$, the $\EE$-perturbed variant of $A$ is a real algebraic diagonablizable matrix.
\end{restatable}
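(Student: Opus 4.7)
The plan is to prove the two claims about the matrix $A' = S D' S^{-1}$ in three moves.

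First, the diagonalizability of $A'$ is immediate: by construction $A'$ is similar to the diagonal matrix $D'$ via the same invertible matrix $S$ already used for $A$, so $A'$ is diagonalizable over $\mathbb{C}$, and its eigenvalues are the entries $\varepsilon_i D[i,i]$ on the diagonal of $D'$.

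Second, to show $A'$ is a real matrix, I will exploit the bijection $\bij{D}$ and the structural constraints imposed on $S^{-1}$, $D$, and $\EE$ just before Definition~\ref{def:simple_perturb}. The plan is to introduce the $k\times k$ permutation matrix $P$ associated with $\bij{D}$, where $P[i,l] = 1$ iff $l = \bij{D}(i)$; since $\bij{D}$ is an involution, $P^{-1} = P$. I will then verify three identities. Identity (i): $\overline{S^{-1}} = P\,S^{-1}$, because the $\bij{D}(i)^{th}$ row of $S^{-1}$ was chosen to be the entrywise complex conjugate of the $i^{th}$ row. Identity (ii): $\overline{S} = S\,P$, obtained by inverting identity (i) and using $P^{-1} = P$. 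Identity (iii): $\overline{D'} = P\,D'\,P$, because the eigenvalues at positions $i$ and $\bij{D}(i)$ on the diagonal of $D$ are complex conjugates, the scalars $\varepsilon_i$ are real, and the perturbation satisfies $\varepsilon_i = \varepsilon_{\bij{D}(i)}$; a direct entrywise check on the diagonal confirms this. Combining the three,
\begin{equation*}
\overline{A'} \;=\; \overline{S}\,\overline{D'}\,\overline{S^{-1}} \;=\; (S P)(P D' P)(P S^{-1}) \;=\; S\,D'\,S^{-1} \;=\; A',
\end{equation*}
where $P^2 = I$ is used twice to collapse the middle factors. Hence $\overline{A'} = A'$, so $A'$ is real.

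Third, for the real-algebraic refinement, I will appeal to standard closure properties of $\mathbb{A}$ noted in the preliminaries: the characteristic polynomial of $A$ has algebraic coefficients, so its roots (the diagonal entries of $D$) are algebraic; moreover, for each eigenvalue the corresponding eigenspace of $\transp{A}$ is the null space of a matrix with algebraic entries, so a basis can be chosen with entries in $\mathbb{A}$, and the conjugate-pairing of rows required by the construction of $S^{-1}$ can then be enforced by taking complex conjugates (which preserves $\mathbb{A}$). Consequently $S, S^{-1}, D$ can be taken over $\mathbb{A}$, and if $\EE \in \mathbb{Q}^k$ then $D'$ also has entries in $\mathbb{A}$; so $A' = S D' S^{-1}$ has algebraic entries, and combined with the reality established above, every entry of $A'$ lies in $\myRA$.

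The main obstacle is the bookkeeping in the second step: one must verify that the author's specific block-segment choice of eigenvectors truly forces the permutation $P$ to simultaneously intertwine complex conjugation for $S^{-1}$ and for $D'$, in particular that the reality condition $\varepsilon_i = \varepsilon_{\bij{D}(i)}$ is exactly what is needed to make identity (iii) hold. Once the three conjugation identities are in place, the cancellation collapsing $\overline{A'}$ to $A'$ is purely formal, and the algebraic case is a cosmetic addendum.
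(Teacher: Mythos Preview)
Your proof is correct and takes a genuinely different, cleaner route than the paper's own argument.

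The paper first reduces an arbitrary perturbation $\EE$ to a composition of ``simple'' perturbations (each altering only one diagonal entry, or a conjugate pair of entries), and then for each simple perturbation computes the $(q,p)$-entry of $A'$ explicitly as $\sum_{r} D'[r,r]\,S^{-1}[r,p]\,S^{-1}[r,q]$, splits this sum into the original $A[q,p]$ plus a correction term $(\varepsilon_i-1)(\cdots)$, and argues in two cases (real vs.\ non-real $D[i,i]$) that the correction term is real. Your approach bypasses both the reduction to simple perturbations and the entrywise case analysis: by encoding $\bij{D}$ as an involutive permutation matrix $P$ and checking the three identities $\overline{S^{-1}}=PS^{-1}$, $\overline{S}=SP$, $\overline{D'}=PD'P$, the reality of $A'$ collapses to the one-line computation $\overline{A'}=(SP)(PD'P)(PS^{-1})=A'$. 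This is more conceptual, handles general $\EE$ directly, and makes transparent exactly where the hypothesis $\varepsilon_i=\varepsilon_{\bij{D}(i)}$ is used (namely in identity~(iii)). The paper's coordinate computation, by contrast, makes the dependence on the eigenvector choices more explicit but at the cost of a longer and more error-prone argument. For the real-algebraic refinement, both proofs rely on the same closure facts about $\mathbb{A}$.
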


\begin{proof}
    Given a perturbation $\EE$ w.r.t. $D$, we first define $k$ \emph{simple} perturbations $\EE_i ~(1 \le i \le k)$ w.r.t. D as follows: $\EE_i$ has all its components set to $1$, except for the $i^{th}$ component, which is set to $\varepsilon_i$. Furthermore, if $D[i,i]$ is not real, then the $\bij{D}(i)^{th}$ component of $\EE_i$ is also set to $\varepsilon_i$.  It is easy to see from Definition~\ref{def:simple_perturb} that each $\EE_i$ is a perturbation w.r.t. $D$.  Moreover, if $j = \bij{D}(i)$, then $\EE_j = \EE_i$.

    Let $\widehat{\EE} = \{\EE_{i_1}, \ldots \EE_{i_u}\}$ be the set of all \emph{unique}  perturbations w.r.t $D$ among $\EE_1, \ldots \EE_k$. It follows once again from Definition~\ref{def:simple_perturb} that the $\EE$-perturbed variant of $A$ can be obtained as by sequence of $\EE_{i_j}$-perturbations, where $\EE_{i,j} \in \widehat{\EE}$.  Specifically, let $A_{0,\widehat{\EE}} = A$ and $A_{v, \widehat{\EE}}$ be the $\EE_{i_v}$-perturbed variant of $A_{v-1, \widehat{\EE}}$ for all $v \in \{1, \ldots u\}$. Then, the $\EE$-perturbed variant of $A$ is identical to $A_{u, \widehat{\EE}}$. This shows that it suffices to prove the theorem only for simple perturbations $\EE_i$, as defined above.  We focus on this special case below.

    Let $A' = S D' S^{-1}$ be the $\EE_i$-perturbed variant of $A$, and let $D[i,i] = \alpha$. For every $p \in \{1, \ldots k\}$, let $\mathbf{e_p}$ denote the $p$-dimensional unit vector whose $p^{th}$ component is $1$.  Then, $A'\mathbf{e_p}$ gives the $p^{th}$ column of $A'$. We prove the first part of the lemma by showing that $A'~\mathbf{e_p} = (S ~D~' S^{-1})~ \mathbf{e_p} \in \mathbb{R}^{k\times 1}$ for all $p \in \{1, \ldots k\}$.  

    Let $\mathbf{T}$ denote $D' ~S^{-1}~ \mathbf{e_p}$.  Then $\mathbf{T}$ is a column vector with $\mathbf{T}[r] = D'[r,r] ~S^{-1}[r,p]$ for all $r \in \{1, \ldots k\}$.  Let $\mathbf{U}$ denote $S \mathbf{T}$. We know from above that $\mathbf{U}$ is the $p^{th}$ column of the matrix $A'$.  To compute $\mathbf{U}$, recall that the rows of $S^{-1}$ form a basis of $\mathbb{C}^k$. Therefore, for every $q \in \{1, \ldots k\}$, $S^{-1}~\mathbf{e_q}$ can be viewed as transforming the basis of the unit vector $\mathbf{e_q}$ to that given by the rows of $S^{-1}$ (modulo possible scaling by real scalars denoting the lengths of the row vectors of $S^{-1}$). Similarly, computation of $\mathbf{U} = S \mathbf{T}$ can be viewed as applying the inverse basis transformation to $\mathbf{T}$.  It follows that the components of $\mathbf{U}$ can be obtained by computing the dot product of $\mathbf{T}$ and the transformed unit vector $S^{-1}~\mathbf{e_q}$, for each $q \in \{1, \ldots k\}$. In other words, $\mathbf{U}[q] = \mathbf{T}\cdot (S^{-1}~\mathbf{e_q})$.  We show below that each such $\mathbf{U}[q]$ is real.

    By definition, $\mathbf{U}[q] = \sum_{r=1}^k (\mathbf{T}[r]~S^{-1}[r,q])$ $= \sum_{r=1}^k (D'[r,r] ~S^{-1}[r,p] ~S^{-1}[r,q])$.  
    We consider two cases below.    
    \begin{itemize}
        \item If $D[i,i] = \alpha$ is real, recalling the definition of $D'$, the expression for $\mathbf{U}[q]$ simplifies to $\sum_{r=1}^k (D[r,r] ~S^{-1}[r,p] ~S^{-1}[r,q])$ $+$ $(\varepsilon_i - 1)~ \alpha ~S^{-1}[i,p]~ S^{-1}[i,q]$.  Note that $\sum_{r=1}^k (D[r,r] ~S^{-1}[r,p] ~S^{-1}[r,q])$ is the $q^{th}$ component of the vector $(S D S^{-1}) ~\mathbf{e_p} = A ~\mathbf{e_p}$.  Since $A$ is real, so must be the $q^{th}$ component of $A ~\mathbf{e_p}$.  Moreover, since $\alpha$ is real, by our choice of $S^{-1}$, both $S^{-1}[i,p]$ and $S^{-1}[i,q]$ are real.  Since $\varepsilon_i$ is also real, it follows that $(\varepsilon_i - 1)~\alpha ~S^{-1}[i,p] ~S^{-1}[i,q]$ is real.  Hence $\mathbf{U}[q]$ is real for all $q \in \{1, \ldots k\}$.
        \item If $D[i,i] = \alpha$ is not real, from Definition~\ref{def:simple_perturb}, we know that $D'[i,i] = \varepsilon_i~\alpha$ and $D'[\bij{D}(i),\bij{D}(i)] = \varepsilon_i~\overline{\alpha}$. The expression for $\mathbf{U}[q]$ then simplifies to $\sum_{r=1}^k   \big(D[r,r] ~S^{-1}[r,p] ~S^{-1}[r,q]\big)$ $+$ $(\varepsilon_i - 1) ~(\beta + \gamma)$, where $\beta = \alpha ~S^{-1}[i,p] ~S^{-1}[i,q]$ and $\gamma = \overline{\alpha}~ S^{-1}[\bij{D}(i),p]~ S^{-1}[\bij{D}(i),q]$.  By our choice of $S^{-1}$, we know that $S^{-1}[\bij{D}(i),p] = \overline{S^{-1}[i,p]}$ and $S^{-1}[\bij{D}(i),q] = \overline{S^{-1}[i,q]}$.  Therefore, $\beta = \overline{\gamma}$ and hence $(\varepsilon_i - 1)~ (\beta + \gamma)$ is real. By a similar argument as in the previous case, it follows that $\mathbf{U}[q]$ is real for all $q \in \{1, \ldots k\}$.
    \end{itemize}
    The proof of the second part of the lemma follows from the proof of the first part, and from the following facts about real algebraic matrices.
    \begin{itemize}
        \item If $A$ is a real algebraic matrix, then every eigenvalue of $A$ is either a real or complex algebraic number.
        \item If $A$ is diagonalizable, then for every real (resp. complex) algebraic eigenvalue of $A$, there exists a set of real (resp. complex) algebraic eigenvectors that form a basis of the corresponding eigenspace.
    \end{itemize}   

    Now, if $D[i,i] = \alpha$ is real, then $\mathbf{U}[q] = \sum_{r=1}^k (D[r,r] ~S^-1[r,p] ~S^{-1}[r,q])$ $+$ $(\varepsilon_i - 1)~ \alpha ~S^{-1}[i,p]~ S^{-1}[i,q]$.  Notice that $\sum_{r=1}^k (D[r,r] ~S^{-1}[r,p] ~S^{-1}[r,q])$ $=$ $\big(A \mathbf{e_p}\big)[q]$ is a real algebraic number since $A$ is a real algebraic matrix. Furthermore, $\alpha$, $S^{-1}[i,p]$ and $S^{-1}[i,q]$ are all real algebraic numbers.  Since $\varepsilon_i \in \mathbb{Q}$, it follows that $(\varepsilon_i - 1)~ \alpha ~S^{-1}[i,p]~ S^{-1}[i,q]$ is a real algebraic number for all $p, q \in \{1, \ldots k\}$.  Hence,  $\mathbf{U}[q]$ is a real algebraic number.  

    If $D[i,i] = \alpha$ is not real, then we have $\mathbf{U}[q] = \sum_{r=1}^k   \big(D[r,r] ~S^{-1}[r,p] ~S^{-1}[r,q]\big)$ $+$ $(\varepsilon_i - 1) ~(\beta + \gamma)$, where $\beta$ and $\gamma$ are as defined above.  Since $\alpha$, $S^{-1}[i,p]$, $S^{-1}[i,q]$, $S^{-1}[\bij{D}(i), p]$ and $S^{-1}[\bij{D}(i), q]$ are all algebraic numbers, and $\varepsilon_i$ is rational, it follows that $(\varepsilon_i - 1) ~(\beta + \gamma)$ $=$ $(\varepsilon_i - 1) \times 2 \times \myRe(\alpha ~S^{-1}[i,p] ~S^{-1}[i,q])$ is a real algebraic number.  Hence,  $\mathbf{U}[q]$ is a real algebraic number.

\end{proof}

\noindent Recall that the diagonal of $D$ is partitioned into $\alpha_i$-segments, where each $\alpha_i$ is an eigenvalue of $A = SDS^{-1}$ with algebraic multiplicity $\rho_i$. We now use a similar idea to segment a perturbation $\EE$  w.r.t. $D$. Specifically, the first $\rho_1$ elements of $\EE$ constitute the $\alpha_1$-segment of $\EE$, the next $\rho_2$ elements of $\EE$ constitute the $\alpha_2$-segment of $\EE$ and so on.  The following definition formalizes this.

\begin{definition}\label{def:seg-perturb}
    A perturbation $\EE = (\varepsilon_1, \ldots \varepsilon_k)$ is said to be \emph{segmented} if the $j^{th}$ element (whenever present) in every segment of $\EE$ has the same value, for all $1\le j \le \rho_1$.   Formally, if $i = \sum_{s=1}^{l-1} \rho_s + j$ and $1 \le j \le \rho_l \le \rho_1$, then $\varepsilon_i = \varepsilon_j$.  
\end{definition}

Clearly, the first $\rho_1$ elements of a segmented perturbation $\EE$ define the whole of $\EE$. As an example, suppose $(\alpha_1, \alpha_1, \alpha_1, \alpha_2, \alpha_2, \overline{\alpha_2}, \overline{\alpha_2}, \alpha_3)$ is the diagonal of $D$, where $\alpha_1, \alpha_2, \overline{\alpha_2}$ and $\alpha_3$ are distinct eigenvalues of $A$. There are four segments of the diagonal of $D$ (and of $\EE$) of lengths $3, 2, 2$ and $1$ respectively. Example segmented perturbations in this case are $(\varepsilon_1, \varepsilon_2, \varepsilon_3, \varepsilon_1, \varepsilon_2, \varepsilon_1, \varepsilon_2, \varepsilon_1)$ and $(\varepsilon_3, \varepsilon_1,$ $\varepsilon_2, \varepsilon_3, \varepsilon_1, \varepsilon_3, \varepsilon_1, \varepsilon_3)$.  If $\varepsilon_1 \neq \varepsilon_2$, a perturbation that is \emph{not segmented} is $\widetilde{\EE} = (\varepsilon_1, \varepsilon_2, \varepsilon_3, \varepsilon_2, \varepsilon_3, \varepsilon_2, \varepsilon_3, \varepsilon_1)$. Notice that the first element of the second segment of $\widetilde{\EE}$ is $\varepsilon_2$, while the first element of the first segment is $\varepsilon_1$.

\begin{definition}\label{def:rotation}
    Given a segmented perturbation $\EE = (\varepsilon_1, \ldots \varepsilon_k)$ w.r.t. $D$, a \emph{rotation} of $\EE$, denoted $\tau_{D}(\EE)$, is the segmented perturbation $\EE' = (\varepsilon_1', \ldots \varepsilon_k')$ in which $\varepsilon_{(i \mod \rho_1)+1}' = \varepsilon_i$ for $i \in \{1, \ldots \rho_1\}$, and all other $\varepsilon_i'$s are defined as in Definition~\ref{def:seg-perturb}. 
\end{definition}

Continuing with our example, if $\EE = (\varepsilon_1, \varepsilon_2, \varepsilon_3, \varepsilon_1, \varepsilon_2, \varepsilon_1, \varepsilon_2, \varepsilon_1)$ is a segmented perturbation w.r.t. $D$, then $\tau_{D}(\EE) =$ $(\varepsilon_3, \varepsilon_1, \varepsilon_2, \varepsilon_3, \varepsilon_1, \varepsilon_3, \varepsilon_1, \varepsilon_3)$, $\tau_{D}^2(\EE)$ $=$ $(\varepsilon_2, \varepsilon_3, \varepsilon_1,$ $\varepsilon_2, \varepsilon_3, \varepsilon_2, \varepsilon_3, \varepsilon_2)$ and $\tau_{D}^3(\EE) = \EE$.

\begin{lemma}\label{lem:scale-diagonalizable}
Let $A = S D S^{-1}$ be a $k\times k$ real diagonalizable matrix with eigenvalues $\alpha_i$ of algebraic multiplicity $\rho_i$.  Let $\EE = (\varepsilon_1, \ldots \varepsilon_k)$ be a segmented perturbation w.r.t. $D$ such that all $\varepsilon_j$s have the same sign, and let $A_u$ denote the $\tau_{D}^{u}(\EE)$-perturbed variant of $A$ for $0 \le u < \rho_1$, where $\tau^0(\EE) = \EE$. Then $A^n ~=~ \frac{1}{\big(\sum_{j=1}^{\rho_1} \varepsilon_j^n\big) }\sum_{u=0}^{\rho_1-1} A_u^n$, for all $n \ge 1$. 
\end{lemma}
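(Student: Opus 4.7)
The plan is to exploit the fact that $A$ and all the perturbed variants $A_u$ share the matrices $S$ and $S^{-1}$ in their diagonalizations, differing only in the diagonal matrix. Concretely, if $D_u$ is the diagonal matrix associated with the $\tau_{D}^{u}(\EE)$-perturbation, then $A_u = S D_u S^{-1}$, so $A_u^n = S D_u^n S^{-1}$ and $\sum_{u=0}^{\rho_1-1} A_u^n = S \bigl(\sum_{u=0}^{\rho_1-1} D_u^n\bigr) S^{-1}$. Since $A^n = S D^n S^{-1}$ and the $\varepsilon_j$ share a sign so that $\sum_{j=1}^{\rho_1} \varepsilon_j^n \neq 0$, the identity reduces to the purely diagonal claim
\[
\sum_{u=0}^{\rho_1 - 1} D_u^n[i,i] \;=\; \Bigl(\sum_{j=1}^{\rho_1} \varepsilon_j^n\Bigr) \cdot D^n[i,i] \qquad (1 \le i \le k).
\]

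To establish this, fix an index $i$ lying in the $\alpha_l$-segment of the diagonal of $D$ at local position $j$, so that $1 \le j \le \rho_l \le \rho_1$ and $D[i,i] = \alpha_l$. Then $D_u[i,i] = (\tau_D^u(\EE))_i \cdot \alpha_l$, hence $D_u^n[i,i] = \bigl((\tau_D^u(\EE))_i\bigr)^n \alpha_l^n$. By the segmentation rule of Definition~\ref{def:seg-perturb}, $(\tau_D^u(\EE))_i = (\tau_D^u(\EE))_j$ because $j \le \rho_l \le \rho_1$. Unrolling Definition~\ref{def:rotation} $u$ times shows that for $j \in \{1,\ldots,\rho_1\}$, $(\tau_D^u(\EE))_j = \varepsilon_{((j-1-u) \bmod \rho_1) + 1}$. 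This is the key structural computation.

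The combinatorial heart of the argument is then to observe that, for fixed $j$, the map $u \mapsto ((j-1-u) \bmod \rho_1) + 1$ is a bijection from $\{0, 1, \ldots, \rho_1 - 1\}$ onto $\{1, 2, \ldots, \rho_1\}$. Consequently
\[
\sum_{u=0}^{\rho_1 - 1} \bigl((\tau_D^u(\EE))_i\bigr)^n \;=\; \sum_{r=1}^{\rho_1} \varepsilon_r^n,
\]
a quantity that is \emph{independent of} $i$. Multiplying by $\alpha_l^n = D^n[i,i]$ yields the diagonal identity above, and pushing the scalar $\sum_{j=1}^{\rho_1} \varepsilon_j^n$ through the conjugation by $S$ (and dividing by it) completes the proof.

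The main obstacle is bookkeeping: one must verify that the rotation formula derived for indices $j \in \{1,\ldots,\rho_1\}$ in the first segment correctly propagates, via the segmentation condition, to every index in every later segment. This is where the hypotheses $\rho_l \le \rho_1$ and the definition of a \emph{segmented} perturbation are used crucially. The same-sign (and nonzero) assumption on the $\varepsilon_j$ plays no role in the algebraic identity itself; it is needed only at the final step to guarantee that $\sum_{j=1}^{\rho_1} \varepsilon_j^n \neq 0$ for every $n \ge 1$, so that one may legitimately divide.
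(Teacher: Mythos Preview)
Your proposal is correct and follows essentially the same approach as the paper: both reduce to the diagonal identity $\sum_{u} D_u^n = \bigl(\sum_j \varepsilon_j^n\bigr) D^n$ via the observation that, for each fixed diagonal position, the multiset $\{(\tau_D^u(\EE))_i : 0 \le u < \rho_1\}$ coincides with $\{\varepsilon_1,\ldots,\varepsilon_{\rho_1}\}$, and both invoke the same-sign hypothesis only to justify the final division. The sole cosmetic difference is that you derive the explicit rotation formula $(\tau_D^u(\EE))_j = \varepsilon_{((j-1-u)\bmod \rho_1)+1}$ and read the bijection off it, whereas the paper states the bijection by exhibiting the unique $u$ for each pair $(i,j)$.
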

\begin{proof}
Let $\EE_u$ denote $\tau^u(\EE)$ for $0 \le u < \rho_1$, and let $\EE_u[i]$ denote the $i^{th}$ element of $\EE_u$ for $1 \le i \le k$.  It follows from Definitions~\ref{def:seg-perturb} and \ref{def:rotation} that for each $i, j \in \{1, \ldots \rho_1\}$, there is a unique $u \in \{0, \ldots \rho_1-1\}$ such that $\EE_u[i] = \varepsilon_j$. Specifically, $u = i-j$ if $i \ge j$, and $u = (\rho_1-j)+i$ if $i < j$. 
Furthermore, Definition~\ref{def:seg-perturb} ensures that the above property holds not only for $i \in \{1, \ldots \rho_1\}$, but for all $i \in \{1, \ldots k\}$.  

Let $D_u$ denote the diagonal matrix with $D_u[i,i] = \EE_u[i] D[i,i]$ for $0 \le i < \rho_1$.  Then $D_u^n$ is the diagonal matrix with $D_u^n[i,i] = \big(\EE_u[i] D[i,i]\big)^n$ for all $n \ge 1$. It follows from the definition of $A_u$  that $A_u^n = S~ D_u^n ~S^{-1}$ for  $0 \le u < \rho$ and $n \ge 1$.  Therefore, $\sum_{u=0}^{\rho_1-1} A_u^n$ $=$ $S ~\big(\sum_{u=0}^{\rho_1-1} D_u^n\big)~S^{-1}$.  Now, $\sum_{u=0}^{\rho_1-1} D_u^n$ is a diagonal matrix whose $i^{th}$ element along the diagonal is $\sum_{u=0}^{\rho_1-1} \big(\EE_u[i]  D[i,i]\big)^n$ $=$ $\big(\sum_{u=0}^{\rho_1-1} \EE_u^n[i]\big)~D^n[i,i]$. By virtue of the property mentioned in the previous paragraph, $\sum_{u=0}^{\rho_1-1} \EE_u^n[i] = \sum_{j=1}^{\rho_1} \varepsilon_j^n$ for $1 \le i \le k$. Therefore,  $\sum_{u=0}^{\rho_1-1} D_u^n$ $=$ $\big(\sum_{j=1}^{\rho_1} \varepsilon_j^n\big) ~D^n$, and hence, $\sum_{u=0}^{\rho_1-1} A_u^n$ $=$ $\big(\sum_{j=1}^{\rho_1} \varepsilon_j^n\big)~ S~D^n~S^{-1}$ $=$ $\big(\sum_{j=1}^{\rho_1} \varepsilon_j^n\big)~ A^n$.  Since all $\varepsilon_j$s have the same sign and are non-zero, $\big(\sum_{j=1}^{\rho_1} \varepsilon_j^n\big)$ is non-zero for all $n \ge 1$. It follows that $A^n ~=~ \frac{1}{\big(\sum_{j=1}^{\rho_1} \varepsilon_j^n\big) }\sum_{u=0}^{\rho_1-1} A_u^n$. \qed
\end{proof}

A \emph{positive scaling invariant} property $\mathcal{P}$ of the
sequence of matrices $B_n = \sum_{i=1}^m w_i A_i^n$ is one that is
invariant under scaling of the $A_i$s by a positive
real. That is, if we scale all $A_i$s by the same positive
real, whether $\mathcal{P}$ holds or not stays unchanged. 
Examples of such properties include $\ENSum{}$, $\EPSum{}$, 
non-negativity (resp. positivity), i.e. is $B_n[i,j] \ge 0$ (resp. $< 0$) for all $n\ge 1$, $1 \le i,j \le k$, existence of zero (i.e. is $B_n$ the all $0$-matrix for some $n \ge 1$), existence of a zero element (i.e. is $B_n[i,j] = 0$ for some $n \ge 1$ and some $i, j \in \{1, \ldots k\})$, variants of the $r-$non-negativity (resp. $r-$positivity and $r-$zero) problem (i.e Does there exist at least/exactly/at most $r$ non-negative (resp. positive/zero) elements in $B_n$ (for all $n \geq 1$) for a given $r \in [1,k]$)  etc.

\noindent We now state and prove the main theorem of this section, which essentially gives a reduction technique to reduce decision problems for a \emph{positive scaling invariant} property on sets of diagonalizable matrices to a decision problem for the same property of simple matrices : 

\begin{theorem}
\label{thm:diag-to-simple-technique}
    The decision problem for every positive scaling invariant property on rational diagonalizable matrices effectively reduces to the decision problem for the property on real algebraic simple matrices.
\end{theorem}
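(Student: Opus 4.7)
The plan is to leverage Lemma~\ref{lem:scale-diagonalizable} to rewrite $B_n = \sum_{i=1}^m w_i A_i^n$ as a positive multiple of a weighted sum of powers of real algebraic simple matrices, and then invoke positive scaling invariance of $\mathcal{P}$.

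For each diagonalizable $A_i = S_i D_i S_i^{-1}$, I would choose a segmented positive rational perturbation $\EE^{(i)} \in \mathbb{Q}_{>0}^k$ w.r.t.\ $D_i$ such that, for every $0 \le u < \rho_1^{(i)}$, the $\tau_{D_i}^u(\EE^{(i)})$-perturbed variant $A_{i,u}$ is simple. The eigenvalues of each $A_{i,u}$ are products $\varepsilon_j^{(i)} \alpha$ with $\alpha$ an eigenvalue of $A_i$, so simplicity reduces to avoiding finitely many non-trivial algebraic equalities of the form $\varepsilon_j^{(i)} \alpha_l = \varepsilon_{j'}^{(i)} \alpha_{l'}$, each carving out a measure-zero subset of $\mathbb{Q}_{>0}^{\rho_1^{(i)}}$; any sufficiently generic choice of positive rationals satisfies all constraints simultaneously. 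Note that the conjugate-pair requirement $\varepsilon_i = \varepsilon_{\bij{D_i}(i)}$ is automatically met because $\bij{D_i}$ maps position $j$ in the $\alpha$-segment to position $j$ in the $\overline{\alpha}$-segment. By Lemma~\ref{lem:simple-perturb}, each $A_{i,u}$ is a real algebraic matrix.

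With these perturbations in hand, Lemma~\ref{lem:scale-diagonalizable} yields $A_i^n = \frac{1}{Z_i(n)}\sum_{u=0}^{\rho_1^{(i)}-1} A_{i,u}^n$ where $Z_i(n) = \sum_{j=1}^{\rho_1^{(i)}}(\varepsilon_j^{(i)})^n > 0$. Setting $Z(n) = \prod_{i=1}^m Z_i(n) > 0$ and multiplying $B_n$ through, I would expand the product of the $Z_{i'}(n)$'s by distributivity to obtain
\begin{equation*}
Z(n)\, B_n \;=\; \sum_{i, \vec{j}, u} w_i \big(\beta_{i,\vec{j}}\, A_{i,u}\big)^n,
\end{equation*}
where $\vec{j}$ ranges over $\prod_{i' \neq i}\{1,\ldots,\rho_1^{(i')}\}$ and $\beta_{i,\vec{j}} = \prod_{i' \neq i} \varepsilon_{j_{i'}}^{(i')}$ is a positive rational. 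Since scaling by a nonzero scalar preserves simplicity, each $\beta_{i,\vec{j}}\, A_{i,u}$ is a real algebraic simple matrix, so the set $\AAA' = \{(w_i,\, \beta_{i,\vec{j}}\, A_{i,u})\}$ constitutes an instance of the same problem over real algebraic simple matrices, whose associated sum-of-powers sequence is exactly $Z(n)\, B_n$.

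Finally, since $Z(n) > 0$ for all $n \ge 1$, positive scaling invariance of $\mathcal{P}$ gives $\mathcal{P}(B_n) \Leftrightarrow \mathcal{P}(Z(n)\, B_n)$, completing the reduction. The main obstacles I anticipate are (i) the genericity argument establishing existence of the perturbations $\EE^{(i)}$, which requires verifying that each algebraic equality among the $\varepsilon_j^{(i)} \alpha_l$ is genuinely non-trivial (so that positive rational solutions exist in abundance and are effectively computable), and (ii) justifying that the formal notion of positive scaling invariance, defined in the paper as scaling all $A_i$s by a single positive constant, also permits multiplying the sequence $B_n$ by the non-constant positive scalar sequence $Z(n)$ — this is transparent for each listed example of such a property but deserves an explicit note.
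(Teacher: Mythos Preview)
Your argument is correct and takes a genuinely different route from the paper. The paper avoids having a separate normalizing factor $Z_i(n)$ for each $A_i$ by an $\mathrm{lcm}$ trick: it sets $\mu = \mathrm{lcm}(\mu_1,\ldots,\mu_m)$ where $\mu_i$ is the largest algebraic multiplicity of $A_i$, chooses a single pool $\varepsilon_1,\ldots,\varepsilon_\mu$ of positive rationals, and for each $A_i$ uses $\eta_i = \mu/\mu_i$ segmented perturbations (together with all their rotations) so that the normalizing factor becomes the \emph{same} $\sum_{j=1}^{\mu}\varepsilon_j^n$ for every $i$. This common factor then pulls out of $\sum_i w_i A_i^n$ directly, with no need to expand a product of $Z_i$'s or to absorb scalars into the matrices. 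Your route instead keeps the per-matrix factors $Z_i(n)$, multiplies them together, and distributes, yielding scalar multiples $\beta_{i,\vec j}\,A_{i,u}$; this is more elementary (no $\mathrm{lcm}$ combinatorics) but produces a larger output instance, since the number of simple matrices grows like $\sum_i \rho_1^{(i)}\prod_{i'\neq i}\rho_1^{(i')}$ rather than $m\mu$.

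On your two flagged obstacles: (i) is handled exactly as you say, and matches the paper's own condition $\varepsilon_p/\varepsilon_q \neq |\alpha_{i,j}/\alpha_{i,l}|$; an effective choice is straightforward once the finitely many forbidden ratios are known. Regarding (ii), your worry is well placed: the paper's stated definition of positive scaling invariance (scaling all $A_i$ by one positive constant, hence $B_n \mapsto c^n B_n$) does not literally cover multiplication by a non-geometric positive sequence like $\sum_j \varepsilon_j^n$ or your $Z(n)$. The paper itself does not invoke the abstract definition at that point; instead it concludes by observing that each entry $\sum_i w_i A_i^n[p,q]$ has the same sign ($>0$, $<0$, or $=0$) as the corresponding entry of the new sum, which suffices for every property on the listed menu. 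You should close your argument the same way rather than appealing to the definition as stated.
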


\begin{proof}
    The proof uses a variation of the idea used in the proof of Lemma~\ref{lem:scale-diagonalizable}. We assume that each matrix $A_i$ is in $\mathbb{Q}^{k \times k}$ and $A_i = S_i D_i S_i^{-1}$, where $D_i$ is a diagonal matrix with segments along the diagonal arranged in descending order of algebraic multiplicities of the corresponding eigenvalues.  Let $\nu_i$ be the number of distinct eigenvalues of $A_i$, and let these eigenvalues be $\alpha_{i,1}, \ldots \alpha_{i, \nu_i}$.  Let $\mu_i$ be the largest algebraic multiplicity among the eigenvalues of $A_i$ and let $\mu = lcm(\mu_1, \ldots \mu_m)$. We now choose \emph{positive} rationals $\varepsilon_1, \ldots \varepsilon_\mu$ such that (i) all $\varepsilon_j$s are distinct, and (ii) for every $i \in \{1, \ldots m\}$, for every distinct $j, l \in \{1, \ldots \nu_i\}$ and for every distinct $p, q \in \{1, \ldots \mu\}$, we have $\frac{\varepsilon_p}{\varepsilon_q} \neq \lvert \frac{\alpha_{i,j}}{\alpha{i,l}} \rvert$.  Since $\mathbb{Q}$ is a dense set, such a choice of $\varepsilon_1, \ldots \varepsilon_\mu$ can always be made once all $\lvert \frac{\alpha_{i,j}}{\alpha_{i,l}}\rvert$s are known, even if within finite precision bounds. 

    For $1 \le i \le m$, let $\eta_i$ denote $\mu/\mu_i$.  We now define $\eta_i$ distinct and segmented perturbations w.r.t. $D_i$ as follows, and denote these as $\EE_{i,1}, \ldots \EE_{i, \eta_i}$.  For $1 \le j \le \eta_i$, the first $\mu_i$ elements (i.e. the first segment) of $\EE_{i,j}$ are $\varepsilon_{(j-1)\mu_i + 1}, \ldots \varepsilon_{j\mu_i}$ (as chosen in the previous paragraph), and all other elements of $\EE_{i,j}$ are defined as in Definition~\ref{def:seg-perturb}. For each $\EE_{i,j}$ thus obtained, we also consider its rotations $\tau_{D_i}^u(\EE_{i,j})$ for $0 \le u < \mu_i$.  For $1 \le j \le \eta_i$ and $0 \le u < \mu_i$, let $A_{i,j,u} = S_i ~D_{i,j,u}~S_i^{-1}$ denote the $\tau_{D_i}^u(\EE_{i,j})$-perturbed variant of $A_i$.  It follows from Definition~\ref{def:simple_perturb} that if we consider the set of diagonal matrices $\{D_{i,j,u} \mid 1 \le j \le \eta_i$, $0 \le u < \mu_i\}$, then for every $p \in \{1, \ldots k\}$ and for every $q \in \{1, \ldots \mu\}$, there is a unique $u$ and $j$ such that $D_{i,j,u}[p,p] = \varepsilon_q$. Specifically, $j = \lfloor q/\mu_i\rfloor$. To find $u$, let $\EE_{i,j}[p]$ be the $\widehat{p}^{th}$ element in a segment of $\EE_{i,j}$, where $1 \le \widehat{p} \le \mu_i$, and let $\widehat{q}$ be $q\mod \mu_i$. Then, $u = (\widehat{p} - \widehat{q})$ if $\widehat{p} \ge \widehat{q}$ and $u = (\mu_i - \widehat{q}) + \widehat{p}$ otherwise. By our choice of $\varepsilon_j$s, we also know from Lemma~\ref{lem:simple-perturb} that all $A_{i, j, u}$s are rational simple matrices.

    Using the same reasoning as in Lemma~\ref{lem:scale-diagonalizable}, we can now show that $A_i^n = \frac{1}{\big(\sum_{j=1}^\mu \varepsilon_j^n\big)}\times \big(\sum_{j=1}^{\eta_i} \sum_{u=0}^{\mu_i-1} A_{i,j,u}^n\big)$.  It follows that $\sum_{i=1}^m w_i A_i^n$ $=$ $\frac{1}{\big(\sum_{j=1}^\mu \varepsilon_j^n\big)}\times \big(\sum_{i=1}^m\sum_{j=1}^{\eta_i} \sum_{u=0}^{\mu_i-1} w_i A_{i,j,u}^n\big)$. Since all $\varepsilon_j$s are positive reals, $\sum_{j=1}^\mu \varepsilon_j^n$ is a positive real for all $n \ge 1$. 

    Hence, for each $p, q \in \{1, \ldots k\}$,  $\sum_{i=1}^m w_i A_i^n[p,q]$ is $> 0$, $< 0$ or $= 0$ if and only if $\big(\sum_{i=1}^m\sum_{j=1}^{\eta_i}\sum_{u=0}^{\mu_i-1} w_i A_{i,j,u}^n[p,q]\big)$ is $> 0$, $< 0$ or $= 0$, respectively.  Since each $A_i$ is a rational matrix and each $\varepsilon_t$ is a rational, by Lemma~\ref{lem:simple-perturb}, we know that $A_{i,j,u}$ is a real algebraic simple matrix. \qed

\end{proof}

The reduction in the proof of Theorem~\ref{thm:diag-to-simple-technique} can be encoded as an algorithm, as shown in Algorithm~\ref{algo:perturb}.

\begin{algorithm}
\caption{Reduction procedure for diagonalizable matrices}
\label{algo:perturb}
\begin{algorithmic}[1]

\Statex {\bfseries Input:~} $\AAA = \{(w_i, A_i)~:~1 \le i \le m,~ w_i \in \mathbb{Q}, ~A_i \in \mathbb{Q}^{k\times k}$ and diagonalizable$\}$
\Statex {\bfseries Output:~} $\BBB = \{(v_i, B_i): ~1 \le i \le t, ~v_i \in \mathbb{Q}, ~B_i  \in \mathbb{RA}^{k\times k}$ are simple$\}$
\Statex \hspace*{1.5cm} s.t. $\left(\sum_{i=1}^m w_i A_i^n\right)$ $ = f(n)\left(\sum_{i=1}^t v_i B_i^n\right)$, where $f(n) > 0$ for all $n \ge 0$?

\State $P \gets \{1\}$; \Comment{Initialize set of forbidden ratios of various $\varepsilon_j$s}
\For{$i$ in $1$ through $m$}\Comment{For each matrix $A_i$}
    \State $R_i \gets \{(\alpha_{i,j}, \rho_{i,j}) ~:~ \alpha_{i,j}$ is eigenvalue of $A_i$ with algebraic multiplicity $\rho_{i,j}\}$;
    \State $D_i \gets$ Diagonal matrix of $\alpha_{i,j}$-segments ordered in decreasing order of $\rho_{i,j}$;
    \State $S_i \gets$ Matrix of linearly independent eigenvectors of $A_i$ s.t. $A_i = S_i D_i S_i^{-1}$;
    \State $P \gets P ~\cup$ $\big\{\left| {\alpha_{i,j}}/{\alpha_{i,l}} \right| ~:~ \alpha_{i,j}, \alpha_{i,l}$ are eigenvalues in $R_i\big\}$;~~
    $\mu_i  \gets \max_j \rho_{i,j}$
\EndFor
\State $\mu = lcm(\mu_1, \ldots \mu_m)$; \Comment{Count of $\varepsilon_j$s needed} 
\For{$j$ in $1$ through $\mu$}\Comment{Generate all required $\varepsilon_j$s}
 \State Choose $\varepsilon_j \in \mathbb{Q}$ s.t. $\varepsilon_j > 0$ and $\varepsilon_j \not\in \{\pi \varepsilon_p ~:~ 1 \le p < j, ~\pi \in P\}$;
\EndFor
\State $\BBB \gets \emptyset$;\Comment{Initialize set of (weight, simple matrix) pairs}
\For{$i$ in $1$ through $m$}\Comment{For each matrix $A_i$}
 \State $\nu_i \gets \mu/\mu_i$; \Comment{Count of segmented perturbations to be rotated for $A_i$}
 \For{$j$ in $0$ through $\nu_i - 1$} \Comment{For each segmented perturbation}
  \State $\EE_{i,j} \gets$ Seg. perturbn. w.r.t. $D_i$ with first $\mu_i$ elements being $\varepsilon_{j\mu_i+1}, \ldots \varepsilon_{(j+1)\mu_i}$;
  \For{$u$ in $0$ through $\mu_i - 1$} \Comment{For each rotation of $\EE_{i,j}$}
     \State $A_{i,j,u} \gets \tau_{D_i}^u(\EE_{i,j})$-perturbed variant of $A$;
     \State $\BBB \gets \BBB \cup \{(w_i, A_{i,j,u})\}$; \Comment{Update $\AAA'$}
  \EndFor
 \EndFor
\EndFor

\State \Return $\BBB$;
\end{algorithmic}
\end{algorithm}

\medskip

\noindent Theorem ~\ref{thm:diag-to-simple-technique} has several interesting consequences, of which we focus on two here.

\begin{corollary}\label{cor:others}
    Given $\AAA = \{(w_1, A_1), \ldots (w_m, A_m)\}$, where each $w_i \in \mathbb{Q}$ and $A_i \in \mathbb{Q}^{k\times k}$ is diagonalizable, and a real value $\varepsilon > 0$, there exists $\BBB = \{(v_1, B_1),$ $\ldots (v_M, B_M)\}$, where each $v_i \in \mathbb{Q}$ and each $B_i \in \mathbb{RA}^{k\times k}$ is simple, such that \\ $ \left| \sum_{i=0}^m w_i A_i^n[p,q] - \sum_{j=0}^{M} v_j B_j^n[p,q]\right| < \varepsilon^n$ for all $p, q \in \{1, \ldots k\}$ and all $n \ge 1$. 
\end{corollary}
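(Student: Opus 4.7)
The plan is to apply Theorem~\ref{thm:diag-to-simple-technique} with a carefully tuned choice of the perturbation parameters $\varepsilon_1, \ldots, \varepsilon_\mu$, so that the scale factor $f(n) := \sum_{l=1}^{\mu} \varepsilon_l^n$ appearing in the identity $A_i^n = f(n)^{-1} \sum_{j,u} A_{i,j,u}^n$ stays very close to $1$ for every $n \ge 1$. Concretely, I would fix $\varepsilon_1 = 1$ and choose $\varepsilon_2, \ldots, \varepsilon_\mu$ to be small distinct positive rationals, subject to the forbidden-ratio constraints used in the proof of Theorem~\ref{thm:diag-to-simple-technique} (the forbidden ratios form a finite set, so density of $\mathbb{Q}$ permits arbitrarily small admissible choices). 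With the simple real-algebraic matrices $A_{i,j,u}$ guaranteed by that theorem and Lemma~\ref{lem:simple-perturb}, I would set $v_{i,j,u} := w_i$ and $B_{i,j,u} := A_{i,j,u}$, giving
\[\sum_{i,j,u} v_{i,j,u}\, B_{i,j,u}^n \;=\; \sum_{i,j,u} w_i\, A_{i,j,u}^n \;=\; f(n) \sum_{i=1}^m w_i\, A_i^n.\]
The approximation error at any entry $(p,q)$ is therefore exactly $\bigl(1 - f(n)\bigr)\,\bigl(\sum_i w_i A_i^n[p,q]\bigr)$, and since $\varepsilon_1 = 1$, one has $|1 - f(n)| = \sum_{l \ge 2} \varepsilon_l^n \le (\mu-1)\,\eta^n$, where $\eta := \max_{l \ge 2} \varepsilon_l$.

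Next I would use diagonalizability to control $|\sum_i w_i A_i^n[p,q]|$ uniformly in $p, q, n$ by a pure exponential. Because each $A_i = S_i D_i S_i^{-1}$, every entry $A_i^n[p,q]$ is a $\mathbb{C}$-linear combination of $n$th powers of eigenvalues of $A_i$ with no polynomial-in-$n$ prefactor, and hence $|\sum_i w_i A_i^n[p,q]| \le K \rho^n$ for a single constant $K$ readable off the spectral decompositions, where $\rho$ is the largest modulus of any eigenvalue of any $A_i$. The total error is then at most $(\mu-1) K\, (\eta \rho)^n$, so choosing $\eta < \varepsilon / \bigl((\mu-1) K \rho\bigr)$ forces the total error to be strictly less than $\varepsilon^n$ for every $n \ge 1$: the $n = 1$ case is the tightest, and larger $n$ follows because $((\mu-1)K)^{1-n}$ is at most $1$ once $(\mu-1)K \ge 1$, which can always be arranged by inflating $K$.

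The main obstacle is showing that the $\varepsilon_l$'s can simultaneously meet both requirements: (i) they must avoid the finitely many forbidden ratios from Theorem~\ref{thm:diag-to-simple-technique} to ensure the $A_{i,j,u}$ are simple real-algebraic matrices, and (ii) $\eta$ must be small enough to force the inequality above. Since the forbidden set is finite while the admissible-smallness condition defines an open interval of positive rationals, density of $\mathbb{Q}$ makes joint satisfaction routine. A minor technical step is extracting the uniform constant $K$ across all entries and matrices, which is immediate from the spectral-decomposition formulae for diagonalizable matrices.
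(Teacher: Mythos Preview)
Your proposal is correct and follows essentially the same approach as the paper: fix $\varepsilon_1 = 1$, take the remaining $\varepsilon_j$'s to be small positive rationals satisfying the forbidden-ratio constraints, and bound the resulting error $|f(n)-1|\cdot\bigl|\sum_i w_i A_i^n[p,q]\bigr|$ using the spectral decomposition of each $A_i$. The paper's proof is terser (it simply records the explicit threshold $\varepsilon_j < \min\bigl(1,\ \varepsilon/(\mu k\,|\alpha|\,\max(1,\gamma))\bigr)$ with $\alpha$ the eigenvalue of largest modulus and $\gamma = \max_{i,p,q,r}|S_i[p,r]\,S_i^{-1}[r,q]|$), but your derivation of the constant via $K\rho^n$ and your handling of the $n\ge 2$ case by inflating $K$ so that $(\mu-1)K\ge 1$ make the argument more explicit without changing its substance.
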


\begin{proof}
    Let $\alpha$ be the eigenvalue with largest modulus among all eigenvalues of $A_1, \ldots A_m$.  Let $\gamma$ be $\max_{i=1}^m \max_{p,q,r=1}^k \left|S_i[p,r] ~S_i^{-1}[r,q] \right|$.  The proof now follows from the proof of Theorem~\ref{thm:main} by choosing $\varepsilon_1 = 1$ and all other $\varepsilon_j$s such that $\varepsilon_j < \min\left(1, \frac{\varepsilon}{\mu k \cdot \left|\alpha\right|\cdot \max(1,\gamma)}\right)$. \qed
\end{proof}
 
 \noindent Using Theorem~\ref{thm:diag-to-simple-technique}, we can now show the decidability of \ENSum{} (resp. \EPSum{}) for sets of diagonalizable matrices.    
 
\begin{corollary}\label{thm:main}
    \ENSum{} and \EPSum{} are decidable for $\AAA = \{(w_1, A_1), \ldots (w_m, A_m)\}$ if all $A_i$s are rational diagonalizable matrices and all $w_i$s are rational.  
\end{corollary}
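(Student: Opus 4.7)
The plan is to deduce this corollary directly from Theorem~\ref{thm:diag-to-simple-technique} combined with Theorem~\ref{thm:consqENSumtoUPos}(1). First I would verify that \ENSum{} and \EPSum{} are positive scaling invariant properties in the sense defined before Theorem~\ref{thm:diag-to-simple-technique}: if we multiply every $A_i$ by a positive real $c$, then $\sum_i w_i (cA_i)^n = c^n \sum_i w_i A_i^n$, and since $c^n > 0$ for all $n \ge 1$, eventual non-negativity (resp. positivity) of the sum is preserved. This is exactly what is needed to legitimately feed \ENSum{} and \EPSum{} into the reduction machinery.

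Next I would apply Theorem~\ref{thm:diag-to-simple-technique} (or equivalently Algorithm~\ref{algo:perturb}) to the input instance $\AAA = \{(w_1, A_1), \ldots, (w_m, A_m)\}$, obtaining a set $\BBB = \{(v_1, B_1), \ldots, (v_t, B_t)\}$ of weight--matrix pairs with $v_i \in \mathbb{Q}$ and $B_i \in \mathbb{RA}^{k\times k}$ simple, together with a positive-valued function $f(n) > 0$ such that $\sum_{i=1}^m w_i A_i^n = f(n) \cdot \sum_{i=1}^t v_i B_i^n$ for all $n \ge 1$. Since $f(n) > 0$, the sign pattern of the left-hand side agrees index-wise with that of the right-hand side, so $\AAA$ is eventually non-negative (resp. positive) iff $\BBB$ is. Thus deciding \ENSum{}/\EPSum{} on $\AAA$ reduces effectively to deciding \ENSum{}/\EPSum{} on $\BBB$, where every matrix is simple.

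Finally I would invoke Theorem~\ref{thm:consqENSumtoUPos}(1) on $\BBB$. A minor subtlety to handle is that Theorem~\ref{thm:consqENSumtoUPos}(1) is stated for sets of simple matrices, and the construction via Lemma~\ref{cor : ENId for M -> LRS of order k^3} together with interleaving and summation yields an LRS whose characteristic polynomial is the LCM of the per-matrix characteristic polynomials. Since each $B_i$ is simple, each of these per-matrix characteristic polynomials has distinct roots, and by our choice of the $\varepsilon_j$'s in Algorithm~\ref{algo:perturb} (specifically, the forbidden-ratio condition that avoids $\varepsilon_p/\varepsilon_q = |\alpha_{i,j}/\alpha_{i,l}|$) the eigenvalue multisets of the different $B_i$'s have no common element, so the LCM is a simple polynomial. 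Thus the resulting LRS is simple, and \UPos{} (resp. \SUPos{}) for simple LRS is decidable in $\myPSPACE$.

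The main obstacle I expect is making sure the simple-matrices decidability result genuinely applies to the $B_i \in \mathbb{RA}^{k\times k}$ produced by the reduction, even though Theorem~\ref{thm:consqENSumtoUPos} was phrased over $\mathbb{Q}^{k\times k}$. I would address this by observing that the reduction from \ENSum{}/\EPSum{} to \UPos{}/\SUPos{} (Section~\ref{sec : ENId reduces to Ultimate Positivity}) uses only the Cayley--Hamilton identity, companion-matrix constructions, and LRS interleaving, all of which go through verbatim for real algebraic matrix entries and produce an LRS with real algebraic coefficients; and the cited decidability of \UPos{}/\SUPos{} for simple LRS extends to real algebraic coefficients (it is actually phrased for algebraic LRS in the underlying references). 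Combining these pieces yields a decision procedure for \ENSum{} and \EPSum{} on rational diagonalizable matrices.
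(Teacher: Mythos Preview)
Your proposal is correct and follows essentially the same route as the paper: apply Theorem~\ref{thm:diag-to-simple-technique} to pass to simple real algebraic matrices, then invoke Theorem~\ref{thm:consqENSumtoUPos}(1), and finally handle the $\mathbb{Q}$ versus $\mathbb{RA}$ mismatch by noting that the Ouaknine--Worrell decidability result for simple LRS extends to real algebraic data. One small remark: your argument that the forbidden-ratio condition forces the eigenvalue multisets of the different $B_i$'s to be disjoint is both unnecessary and not quite what that condition guarantees (it only rules out collisions among perturbed eigenvalues of the \emph{same} $A_i$); however this does not matter, since the LCM of square-free polynomials is automatically square-free, which is all that is needed for the resulting LRS to be simple.
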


\begin{proof}
    Using the reduction technique used in the proof of Theorem~\ref{thm:diag-to-simple-technique}, we can effectively reduce \ENSum{} (resp. \EPSum{}) for $\AAA = \{(w_1, A_1), \ldots (w_m, A_m)\}$ to \ENSum{} (resp. \EPSum{}) for $\AAA' = \bigcup_{i=1}^m \bigcup_{j=1}^{\eta_i}\bigcup_{u=0}^{\mu_i-1} \{(w_i, A_{i,j,u})\}$, where each $A_{i,j,u}$ is a matrix of real algebraic entries. Now we appeal to part (1) of Theorem \ref{thm:consqENSumtoUPos}. In fact, the proof of part (1) of Theorem~\ref{thm:consqENSumtoUPos} relies on a technically intricate technique due to Ouaknine and Worrell~\cite{ouaknine2017ultimate} that establishes the decidability and complexity (upper) bound of \UPos{} and \SUPos{} for simple LRS with rational coefficients and rational initial values. Interestingly, the same proof technique also yields the decidability of \UPos{} and \SUPos{} of simple LRS with real algebraic numbers as coefficients and initial values, which finally completes our proof 
    \qed
\end{proof}

\paragraph{Remark :} The reduction technique (Theorem~\ref{thm:diag-to-simple-technique} from checking a \emph{positive scaling invariant} property for a set of weighted diagonalizable matrices to checking the same property for a set of weighted simple matrices) makes no assumptions about the inner working of the decision procedure for simple matrices. Given \emph{any} black-box decision procedure for checking \emph{any} such property for a set of weighted simple matrices, our reduction tells us how a corresponding decision procedure for checking the same property for a set of weighted diagonalizable matrices can be constructed. 

Diagonalizable matrices have an exponential form solution with constant coefficients for the exponential terms. Hence an algorithm that exploits this specific property of the exponential form (like Ouaknine and Worrell's algorithm~\cite{ouaknine2017ultimate}, originally proposed for checking ultimate positivity of simple LRS) can be used directly in the case of diagonalizable matrices. However, the proposed reduction technique in this paper is neither specific to this algorithm nor does it rely on any special property the exponential form of the solution.

\section{Conclusion}
\label{sec:conc}
\vspace*{-0.2cm}    
In this paper, we investigated eventual non-negativity  and positivity for matrices and the weighted sum of powers of matrices (\ENSum{} / \EPSum{}). We showed their links to problems on linear recurrence sequence showed both lower and upper bounds.  We developed a new perturbation-based technique that allowed us to decide the \ENSum{}, \EPSum{}, and \ENId{} problems for the class of diagonalizable matrices. It is interesting to note that the class of LRS whose companion matrices are diagonalizable coincides with the class of simple LRS, but the class of diagonalizable matrices in general is strictly larger than the class of simple matrices. This shows a marked difference between the corresponding problems on matrices and on LRS.

In this work, we considered matrices with rational entries and weights. However, most of our results hold even with real-algebraic matrices. Of course, this would require adapting the complexity notions and would depend on corresponding results for ultimate positivity for linear recurrences and related problems over reals.
 Also, while we have focused on \emph{eventuality} problems, it still needs to be seen whether our techniques can be adapted  for other problems of interest like the \emph{existence} of a matrix power where all entries are non-negative or zero. Finally, the line of work started in this paper could lead to effective algorithms and applications in varied areas ranging from control theory systems to cyber-physical systems, where eventual properties of matrices play a crucial role.

\bibliographystyle{splncs04}
\bibliography{references}

\end{document}